\documentclass{article}
\usepackage[margin=0.5in]{geometry}
\usepackage[utf8]{inputenc}
\usepackage{amsfonts,graphicx,amssymb}%amsfonts pour les \mathbb
\usepackage{amsmath,amsthm}
\usepackage{graphicx}
\usepackage{epstopdf}
\newcommand{\e}[1]{\begin{equation}#1\end{equation}}
\newcommand{\enn}[1]{\begin{equation*}#1\end{equation*}}%Non numérotée
\newcommand{\tab}[2]{\begin{array}{#1}#2\end{array}}
\newcommand{\syst}[2]{\e{\left\{\tab{#1}{#2} \right. }}
\newcommand{\sysnn}[2]{\enn{\left\{\tab{#1}{#2} \right. }}
%trace +
%trace -
\newcommand{\Ok}{{\Omega_k}}

\newcommand{\Oj}{{\Omega_j}}
\newcommand{\bOk}{{\partial \Omega_k}}%b pour bord
%b pour bord
%c pour commun à j et l
\newcommand{\skj}{{\Sigma_{kj}}}
\newcommand{\sjk}{{\Sigma_{jk}}}
\newcommand{\Gk}{{\Gamma_k}}

% the covariant basis
\newcommand{\Vpsi}{\frac{\nabla \psi}\epsilon}
\newcommand{\Vphi}{\frac{\nabla \phi}\epsilon}
\newcommand{\Vthe}{\frac{\nabla \theta}\epsilon}

%the derivatives WRT the covariant coordinates
\newcommand{\dpsi}{\partial_\psi}
\newcommand{\dphi}{\partial_\phi}
\newcommand{\dthe}{\partial_\theta}
%the orthonarmal basis for the cold plama model
\newcommand{\eun}{\mathbf e_1}%e un
\newcommand{\ede}{\mathbf e_2}%e deux
\newcommand{\etr}{\mathbf e_3}%e trois

%% from the UWVF
%trace rentrante
%trace sortante
\newcommand{\unNh}{[\![ 1, N_h ]\!]}

%x au milieu d'une maille

\newcommand{\dx}{\partial_x}%{\frac{d}{dx}}
%{\frac{d^2}{dx^2}}
\newcommand{\dy}{\partial_y}%{\frac{d}{dy}}
%{\frac{d^2}{dy^2}}
\newcommand{\lu}[2]{\lambda_{#1,#2}}%\underscore
\newcommand{\luoz}{\lambda_{1,0}}%one zero\newcommand{\luzo}{\lambda_{0,1}}%zero one
\newcommand{\luzo}{\lambda_{0,1}}%one zero
\newcommand{\luij}{\lambda_{i,j}}

\newtheorem{hp}{Hypothesis}[section]
\newtheorem{definition}{Definition}[section]
\newtheorem{lemma}{Lemma}[section]
\newtheorem{theorem}{Theorem}[section]

\begin{document}

%\begin{frontmatter}

%% Title, authors and addresses

%% use the tnoteref command within \title for footnotes;
%% use the tnotetext command for theassociated footnote;
%% use the fnref command within \author or \address for footnotes;
%% use the fntext command for theassociated footnote;
%% use the corref command within \author for corresponding author footnotes;
%% use the cortext command for theassociated footnote;
%% use the ead command for the email address,
%% and the form \ead[url] for the home page:
%% \title{Title\tnoteref{label1}}
%% \tnotetext[label1]{}
%% \author{Name\corref{cor1}\fnref{label2}}
%% \ead{email address}
%% \ead[url]{home page}
%% \fntext[label2]{}
%% \cortext[cor1]{}
%% \address{Address\fnref{label3}}
%% \fntext[label3]{}

\title{Well-posedness and generalized plane waves simulations of a 2D mode conversion model}

%% use optional labels to link authors explicitly to addresses:
%% \author[label1,label2]{}
%% \address[label1]{}
%% \address[label2]{}

\author{Lise-Marie Imbert-G\'erard\footnote{imbertgerard@cims.nyu.edu, Courant Institute of Mathematical Sciences, New York University, 251 Mercer street,
New York, NY 10012}
}
\maketitle

\begin{abstract}
%% Text of abstract
 Certain types of electro-magnetic waves propagating in a plasma can undergo a mode conversion process. In magnetic confinement fusion, this phenomenon is very useful to heat the plasma, since it permits to transfer the heat at or near the plasma center. This work focuses on a mathematical model of wave propagation around the mode conversion region, from both theoretical and numerical points of view. It aims at developing, for a well-posed equation, specific basis functions to study a wave mode conversion process. These basis functions, called generalized plane waves, are intrinsically based on variable coefficients. As such, they are particularly adapted to the mode conversion problem. The design of generalized plane waves for the proposed model is described in detail. Their implementation within a discontinuous Galerkin method  then provides numerical simulations of the process. These first 2D simulations for this model agree with qualitative aspects studied in previous works.
\end{abstract}

%\begin{keyword}
%% keywords here, in the form: keyword \sep keyword
Keywords: Wave propagation, variable coefficient, mode conversion, generalized plane waves.
%% PACS codes here, in the form: \PACS code \sep code

%% MSC codes here, in the form: \MSC code \sep code
%% or \MSC[2008] code \sep code (2000 is the default)

%\end{keyword}

%\end{frontmatter}

%%%%%%%%%%%%%%%%%%%%%%%%%
\section{Introduction}%%%%%%%%%%%%%
%%%%%%%%%%%%%%%%%%%%%%%%%
Mode conversion corresponds to a transfer of energy between different types of propagating waves. It is an important problem in magnetic confinement fusion particularly for plasma heating or current drive. Indeed, some waves used for these applications cannot propagate  directly from the launching region at the wall toward the point of the plasma where they would be useful. But some other forms of wave can be sent from the wall toward the plasma, to be converted into the desired wave at the mode conversion region, and so penetrate until the heating or control point. See   \cite{Kop} for a first study of mode conversion equations.
%Indeed, some waves used for these applications cannot propagate outside of the plasma. But some other types of wave can. The latter can be sent from the wall toward the plasma, converted into the adequate wave at the mode conversion region, and so penetrate into the plasma.
Even though experimental models \cite{Bonoli,Laq1,Laq2,Laq3} and simple one dimensional models \cite{Bonoli2,Volpe} have been studied, the two dimensional mathematical model is still not well understood. In  \cite{HaroldConv} a two dimensional model is derived by means of an asymptotic expansion, but  the resulting equation is not standard in the literature ; propagating solutions are then constructed thanks to an integral representation.

Mode conversion corresponds to a propagating wave transmitted from one propagative zone to another one, even though the two zones only touch along a curve. The mode conversion region is defined as the vicinity of this curve.
The two dimensional model studied in the present work comes from the cold plasma model for wave propagation in a plasma confined by a magnetic field, and reads
\begin{equation}\label{eq:2ndOFnt}% Second Order F no tilde
\left( \dx^2 +(d+\overline d)\dx\dy +|d|^2\dy^2\right) F + (d-\overline d)x\partial_y F-\left(1+\frac{1}{\mu}+x(x+y)\right)F = 0,
\end{equation} 
where $F$ is the scalar unknown and $d$ and $\mu$ are complex parameters linked to the confining magnetic field and the electron density.
 Most of the derivation will follow the steps of   \cite{HaroldConv}. However a different exposition is proposed here, highlighting the different steps of the reasoning and insisting on a crucial change of unknowns.  Moreover, Equation \eqref{eq:2ndOFnt} does not appear in   \cite{HaroldConv}, where an equivalent equation is given for a different unknown. Compared to the latter, Equation \eqref{eq:2ndOFnt} has a particular structure, more convenient to prove the well-posedness. This work presents the first well-posedness result for a mode conversion equation obtained by expansion in the mode conversion region.
 In this elliptic second order linear 2D partial differential equation,  the physical properties of the domain appear in the zeroth order term, as the sign of $f(x,y) = x(x+y)$. The domain is propagative on $\{ (x,y) / f(x,y)<0\}$ and absorbing on $\{ (x,y) / f(x,y)>0\}$, see Figure \ref{fig:x(x+y)i} and the mode conversion region is reduced to the vicinity of the origin. It is clear that mode conversion is strongly linked to variable coefficients.  
\begin{figure}
\begin{center}
\includegraphics[height=6cm]{./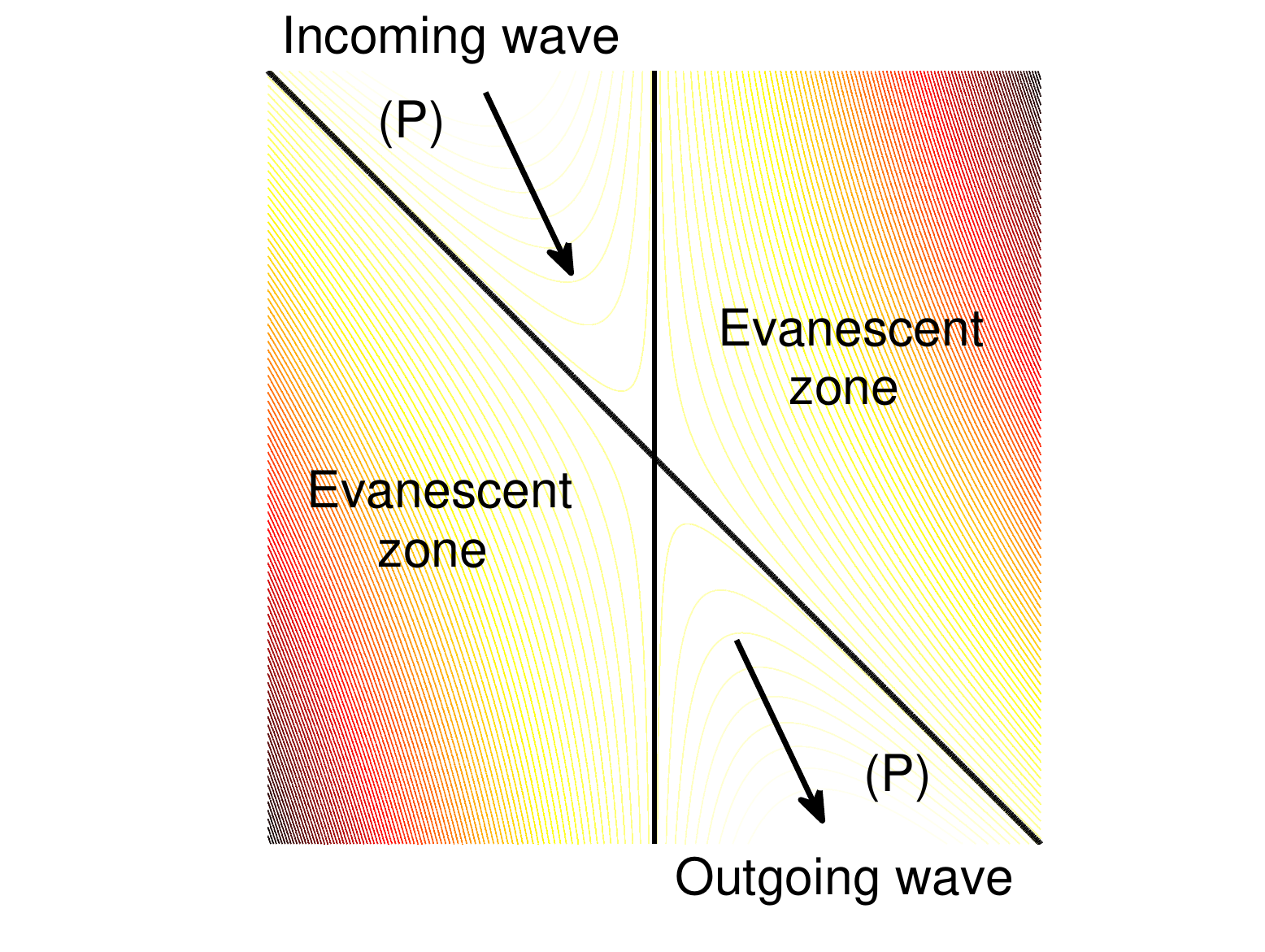}
\end{center}
\caption{Level curves of the function $f(x,y) = x(x+y)$, indicating the propagative (P) and evanescent zones as well as the two cut-offs lines, $x=0$ and $x+y=0$, limiting the different zones. They cross at the mode conversion point $(0,0)$.}
\label{fig:x(x+y)i}
\end{figure} 

This work aims at developing Generalized Plane Waves (GPWs) to study a wave mode conversion process. %These basis functions are intrinsically based on variable coefficients. 
GPWs have been developed following the idea that plane waves are relevant basis functions to solve wave problems numerically, since the oscillatory behavior of the problem is embedded in their definition. In the same way, GPWs encode information about the problem to be solved, but they are specifically adapted to problems with variable coefficients. They were introduced in   \cite{LMIGBD}, and their interpolation properties for Helmholtz equation with a variable coefficient were presented in   \cite{LMIG}. In this work, for the first time, GPWs are designed for a mode conversion equation, namely \eqref{eq:2ndOFnt}, and are implemented in a discontinuous Galerkin method for numerical simulation of the mode conversion process in this model.

%Numerical Results
First numerical evidences of a mode conversion process are displayed for this 2D full-wave expansion model. A typical test case is proposed together with a way to estimate the transmission coefficient across the mode conversion region. The influence of different parameters is illustrated, mainly the influence of the incident angle already studied in \cite{Kop}, in a 1D model in   \cite{Volpe} and in the 2D propagating solutions in   \cite{HaroldConv}.

Section \ref{sec:mod} presents the derivation of the second order equation \eqref{eq:2ndOFnt}, describing the mode conversion process, while the existence and uniqueness of a weak solution to this equation are proved in Section \ref{sec:TS}. Even though Section \ref{sec:mod} is self contained and does not require any prior knowledge on waves in plasma, it is independent of the rest of the article. Therefore it can be skipped by a reader willing to start from Equation \eqref{eq:2ndOFnt}. This work then focuses on numerical aspects. Section \ref{apps} first describes a discontinuous Galerkin method for numerical simulation. Then GPWs adapted to the mode conversion equation are carefully designed, distinguishing between the individual construction of an approximated solution to the equation and the global features of a set of independent GPWs. Numerical results are finally displayed in Section \ref{sec:NS}, showing evidences of mode conversion and highlighting the influence of different parameters on the process.

%%%%%%%%%%%%%%%%%%%%%%%%%%%%%%%%%%%%%%%%%%%%%%%%%%%%%%%5
\section{A wave propagation model in the mode conversion region}%%%%%%%%%%%%%%%%%%%%%%5
\label{sec:mod}%model
%%%%%%%%%%%%%%%%%%%%%%%%%%%%%%%%%%%%%%%%%%%%%%%%%%%%%%%5
This section presents the formal process leading to the equation studied in the rest of this article. 
It is mainly based on   \cite{HaroldConv}, since Equation \eqref{eq:2ndOF} is Equation (59) from that reference. The goal of this presentation is to shed a different light on the derivation of the equation.
For the sake of completeness, preliminary material related to plasma physics and the wave propagation model is presented first.
The following subsection then focuses on Maxwell's equations in the mode conversion region:
since the dielectric tensor is defined in a simpler way in a specific orthogonal coordinate system, the idea is to obtain a reduced system with fewer components of the electric and magnetic fields in those coordinates. Such a simplification follows naturally from the techniques of geometrical optics, see  \cite{Herman}. It is based on physically relevant hypotheses concerning the relative order of the studied quantities, and subsequent expansions valid in the mode conversion region. Consider $\epsilon$, the inverse of the classical geometrical optics expansion parameter.
 If $\omega$ is the wave frequency, $c$ the speed of light in vacuum and $L$ the characteristic equilibrium wavelength, $\epsilon$ is defined as $\epsilon = c/(\omega L)$.
 % It is defined as the ratio of the wavelength in the plasma  to the distance over which the density and magnetic field vary. 
  The model will result from a formal expansion with respect to $\epsilon$. The scaling assumptions follow the perpendicular stratified case, as studied in   \cite{WB}.

%%%%%%%%%%%%%%%%%%%%%%%%%%%%%%%%%%%%%%%%%%%%%%%%%%%%%%%%%%%%%%%%%%
\subsection{Preliminaries}\label{ssec:Prel}

The toroidal geometry can be described  by the classical axisymmetric coordinates $(r,\phi,z)\in \mathbb R^+\times[0,2\pi)\times \mathbb R$, see Figure \ref{fig:Coord}, %{fig:CylCoord}, 
and the corresponding orthonormal right handed basis $(\mathbf e_r,\mathbf e_\phi, \mathbf e_z)$. The poloidal plane is defined as a half plane given by a constant $\phi$, so that $(\mathbf e_r,\mathbf e_z)$ is an orthonormal basis of the poloidal plane while $\nabla \phi$ is orthogonal to the poloidal plane, and $\mathbf e_\phi = r\nabla \phi$ defines the toroidal direction. However, since the wave propagation phenomena are driven by the confining magnetic field, a toroidal coordinate system adapted to the magnetic field will be useful to describe the wave propagation model.
\begin{figure}
\begin{center}
\includegraphics[clip,trim=7cm 3cm 6cm 3cm,width=.47\textwidth]{./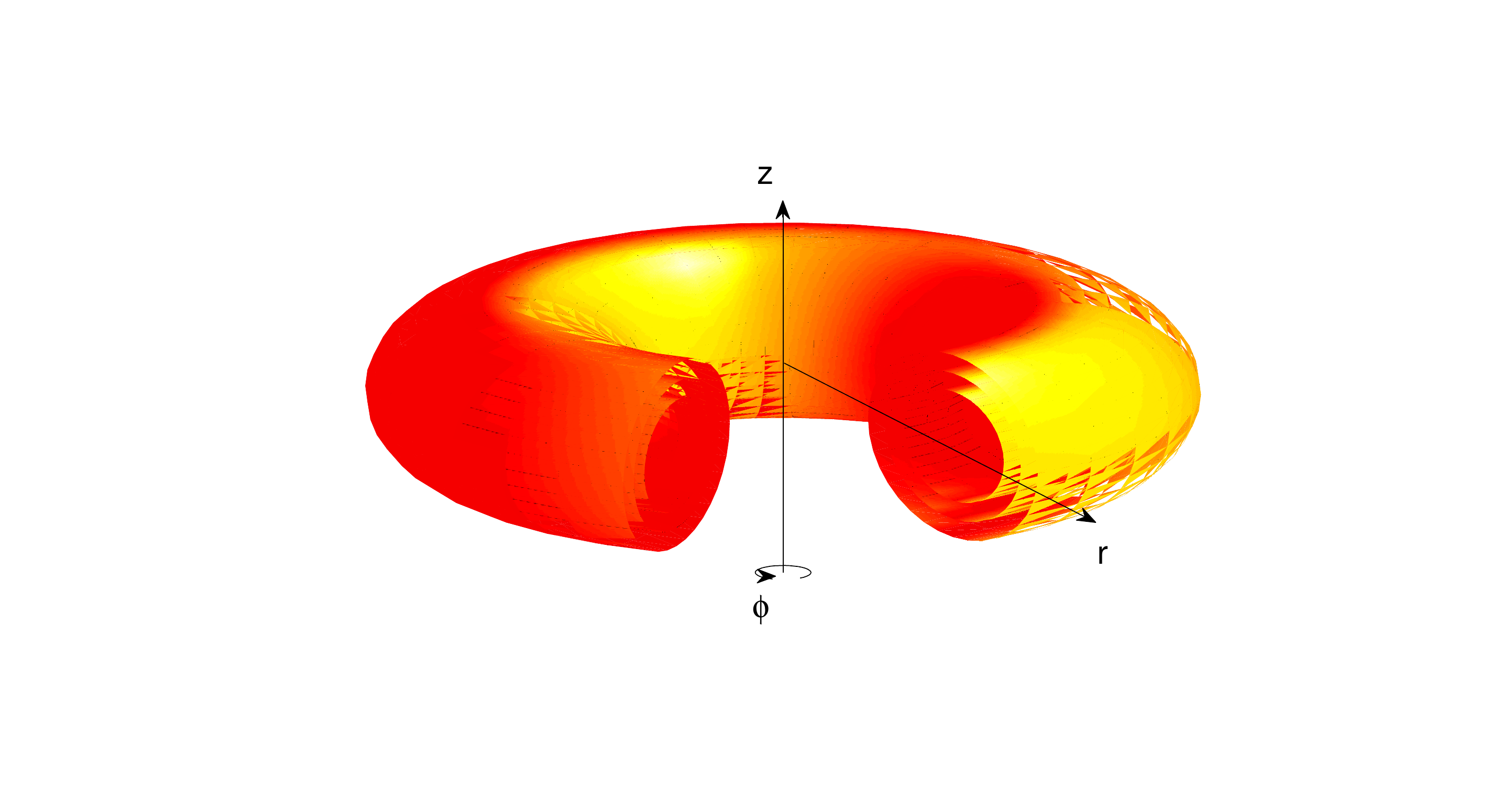}
\includegraphics[clip,trim=3cm .5cm 2.5cm .5cm,width=.47\textwidth]{./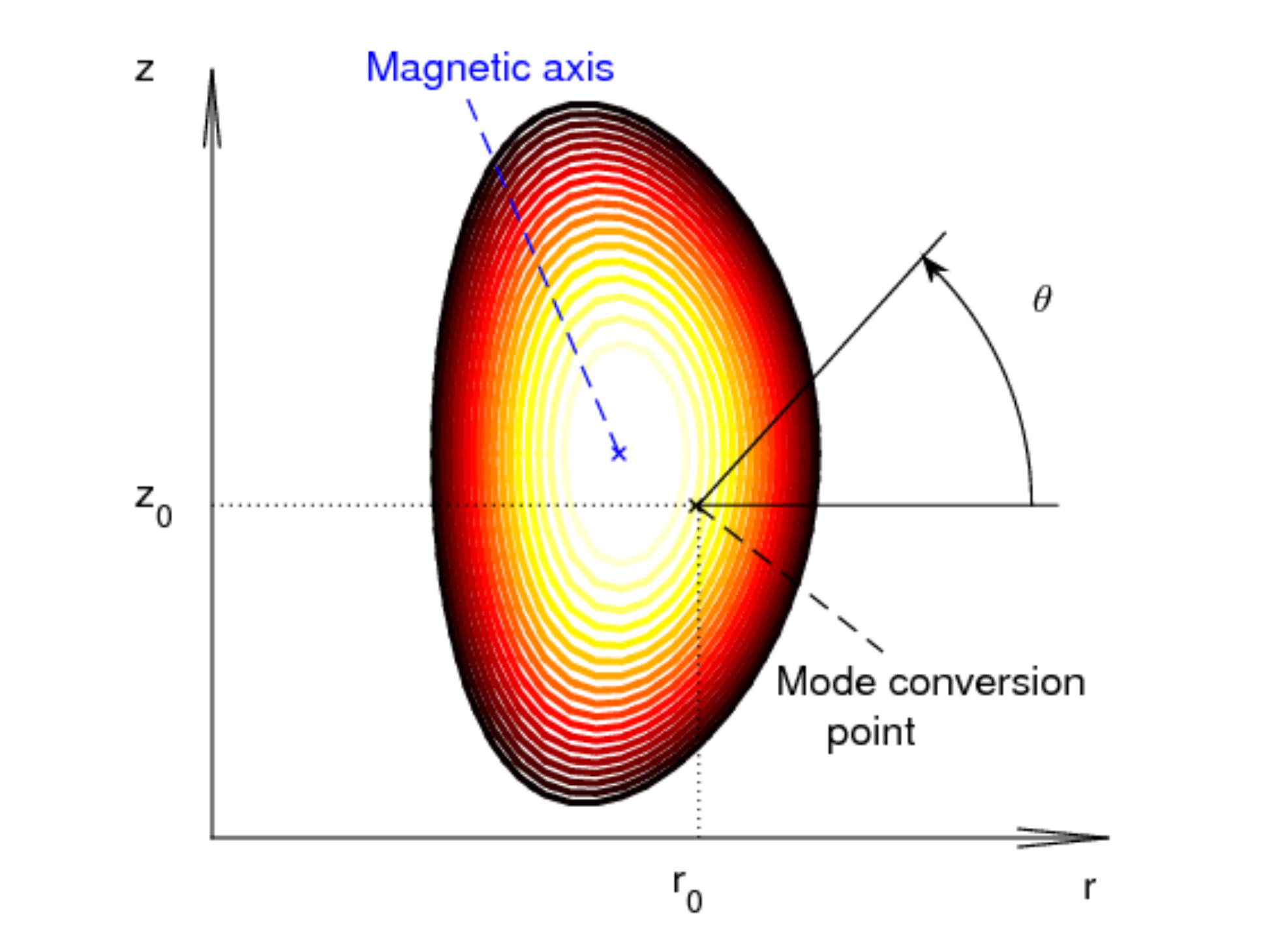}
\end{center}
\caption{Left: Axisymmetric coordinates and toroidal geometry. Right: Poloidal angle $\theta$ and level curves of the flux label $\psi$ in a poloidal half plane. In the poloidal plane, the magnetic axis is the point enclosed by all the flux surfaces. The mode conversion point is distinct from the magnetic axis.}
%\caption{Axisymmetric coordinates and toroidal geometry.}
\label{fig:Coord}
%\label{fig:CylCoord}
\end{figure}  
We assume that the magnetic field is known from an independent solution of MHD equilibrium. The magnetic field lines display a helical shape, winding around the interior of the torus and that the flux surfaces are closed and nested. In such a case, flux coordinates form a set of coordinates adapted to the shape of the flux surfaces of the confining magnetic field. They consist of a radius-like coordinate, the flux label $\psi\in[\psi_{\min},\psi_{\max}]$, and two angle-like coordinates, the toroidal angle $\phi\in[0,2\pi)$ and poloidal angle $\theta\in[0,2\pi)$. The bounds on $\psi$, $\psi_{\min}$ and $\psi_{\max}$, respectively correspond to the magnetic axis and the outermost closed flux surface. The toroidal angle $\phi$ is the same as the axisymmetric coordinates angle, and the coordinates in the poloidal plane are described in Figure \ref{fig:Coord}. %{fig:TorCoord}. 
The magnetic flux $\psi$ labeling a curve in the poloidal plane measures the flux of magnetic field across the surface enclosed by the curve. It is increasing from the magnetic axis to the boundary of the plasma, and is such that $\nabla\psi$ is orthogonal to the confining magnetic field $\mathbf b$.
%\begin{figure}
%\begin{center}
%\includegraphics[width=.7\textwidth]{./TorCoord01.pdf}
%\end{center}
%\caption{Toroidal angle $\theta$ and level curves of the flux label $\psi$ in a poloidal half plane. In the poloidal plane, the magnetic axis is the point enclosed by all the flux surfaces. The mode conversion point is distinct from the magnetic axis.}
%\label{fig:TorCoord}
%\end{figure}  
 The poloidal angle $\theta$ is such that  $ (\nabla \psi,\nabla \theta)$ is a non-orthogonal basis of the poloidal plane, with the same orientation as $(\mathbf e_r,\mathbf e_z)$. The resulting covariant basis $(\nabla \psi,\nabla  \theta,\nabla  \phi)$ is a left handed non-orthogonal non-normalized basis, associated with the flux coordinates $(\psi,\theta,\phi)$.

The mode conversion region, as the intersection of two cut-off surfaces, has been introduced as a curve. It intersects each poloidal plane at a single point, the point $(r,z) = (r_0,z_0)$ on Figure \ref{fig:Coord}, %{fig:TorCoord},
 which stands off the magnetic axis. The mode conversion point is the origin of the flux coordinate system in the poloidal plane,that is to say $( \psi(r_0,z_0),\theta(r_0,z_0))=(0,0)$. Since the origin is not the magnetic axis, at the origin $( \psi,\theta)=(0,0)$ there is no problem to define $\nabla \psi$ while $\nabla \theta$ is not defined. The model will be derived based on a formal expansion with respect to the small parameter $\epsilon$, the inverse of the classical geometrical optics expansion parameter. Following    \cite{HaroldConv}, the mode conversion region is characterized by variations of the poloidal coordinates $(\psi,\theta)$ of order $\sqrt \epsilon$, while the variations of the toroidal angle $\phi$ are of order $1/\epsilon$ as well as $\epsilon r = O(1)$. In this regime the poloidal coordinates can be written as $(\psi,\theta) = \big( \psi(\epsilon r,\epsilon z),\theta(\epsilon r,\epsilon z) \big)$ where $\psi$ and $\theta$ have derivatives of order $O(1)$, so that the derivatives with respect to the flux coordinates will be scaled as $\epsilon(\dpsi,\dthe,\dphi)$, and likewise the gradient vectors will be scaled as $(\Vpsi,\Vthe,\Vphi)$.

%\begin{rk}	The mode conversion occurs in a region which is small with respect to the characteristic equilibrium scale length. Magnitude of $\epsilon$ ? Specify dufferent scaling in different directions due to differences in the distance over which the density and magnetic field vary in different directions\end{rk} 

The curl operator is crucial in electromagnetics. In order to express it in the flux coordinate system in a concise way, consider the scaled contravariant basis associated with the $(\psi,\theta,\phi)$ coordinates, defined by
\enn{\left( \Vthe\times\Vphi,\Vphi\times\Vpsi,\Vpsi\times\Vthe \right).}
The confining magnetic field $\mathbf b$ has a poloidal $\mathbf b_p$ and a toroidal $\mathbf b_t$ components, such that $\mathbf b = \mathbf b_p+\mathbf b_t$. They are defined by
\enn{\mathbf b = \Vpsi\times\Vphi - Q \Vpsi\times\Vthe , \text{ where } \mathbf b_p = \Vpsi\times\Vphi \text{, } \mathbf b_t = - Q \Vpsi\times\Vthe,}
and the safety factor only depends on the magnetic flux, that is to say $Q=Q(\psi)$\footnote{The safety factor measures the winding of the magnetic field lines around the torus. It is proportional to the ratio between the toroidal and poloidal fields $\frac{|\mathbf b_t|}{|\mathbf b_p|}$. The word safety refers to the resulting stability of a configuration, since a high $Q$ tends to improve the stability and therefore the safety.}.
For clarity $b$, $b_t$ and $b_p$ will respectively denote $|\mathbf b|$, $|\mathbf b_t|$ and $|\mathbf b_p|$. 
The Jacobian of the covariant basis thus reads $\Vpsi\times\Vthe\cdot\Vphi=-b_t/(\epsilon r Q) $. 
The contravariant basis also provides an exact expression of the $curl$ operator:
\enn{%\label{eq:contrcurl}%contravariant curl
\tab{rl}{
\nabla \times\mathbf V =&
%\\\displaystyle \phantom{} 
\displaystyle\epsilon\left( 
              \Vthe\times\Vphi  (\dthe V_\phi - \dphi V_\theta) 
              + \Vphi\times\Vpsi  (\dphi V_\psi - \dpsi V_\phi) \right.
\\&\displaystyle \phantom{\epsilon \bigg( 00}\left. 
              + \Vpsi\times\Vthe  (\dpsi V_\theta - \dthe V_\psi)
                                         \right).}}
Another basis will be used to give a simple expression of the dielectric tensor. This other basis is right handed and orthonormal, and is defined by a first vector aligned with $\nabla \psi$ while the third vector is aligned with the magnetic field $\mathbf b$:
\enn{\eun = \Vpsi \left|\Vpsi \right|^{-1},\quad \ede = \mathbf b \times \eun /b, \quad \text{and}\quad \etr=\eun\times\ede = \mathbf b/b.}
The different components of any vector $\mathbf V$ will be denoted as
\e{\label{eq:covcoord}
\mathbf V = V_\psi \Vpsi + V_\theta \Vthe + V_\phi \Vphi= V_1 \eun + V_2 \ede + V_3 \etr.}

Because the covariant, contravariant and orthonormal bases depend on the space variables, any derivative of a vector quantity expressed in any of these bases will involve a derivative of the basis vectors, which becomes a remainder of order $\epsilon$ thanks to the scaling in the mode conversion region. For example one has 
\begin{equation*}
\epsilon \dpsi \left(V \Vthe\right) =\epsilon \dpsi V  \Vthe + V \epsilon \dpsi \Vthe  = \epsilon \dpsi V  \Vthe + O(\epsilon),
\end{equation*}
so that in the covariant basis the divergence operator reads
\e{\label{eq:orthdiv}\tab{rl}{
\nabla\cdot \mathbf V 
&\displaystyle 
= \epsilon\left( \Vpsi\dpsi + \Vthe\dthe+\Vphi \dphi \right)\cdot\left(\Vpsi V_\psi + \Vthe V_\theta+\Vphi V_\phi  \right),\\
& = \mathfrak D_1V_1+\mathfrak D_2V_2+\mathfrak D_3V_3+  O(\epsilon),
}}
where the differential operators $\mathfrak D$ are defined by
\enn{\mathfrak D_1 = \eun\cdot \nabla = \epsilon \eun\cdot \left( \Vpsi \dpsi + \Vthe \dthe \right),%}
%\enn{
\text{   }
\mathfrak D_3 = \etr\cdot \nabla =  \frac{b_t}{ \epsilon r Q b}\epsilon (\dthe +Q  \dphi),}
\enn{\mathfrak D_2 =  \ede\cdot \nabla = \left(bQ\left|\Vpsi\right|\right)^{-1}\epsilon \left( -b_t^2\dthe + Qb_p^2 \dphi \right).}

Since the equilibrium is assumed to be axisymmetric, the fields vary as $e^{\imath N\phi}$, where $N$ satisfies $\epsilon N = O(1)$. This essentially leads to a 2D reduction of the 3D model by restricting the study to the poloidal plane. In this context the mode conversion region then becomes a point, as the intersection of two cut-off curves, and lies at the origin of the poloidal plane $( \psi,\theta)=(0,0)$.

%%%%%%%%%%%%%%%%%%%%%%%%%%%%%%%%%%%%%%%%%%%%%%%%%%%%%%%%%%%%%%%%%%
\subsection{The cold plasma model}

We consider here a toroidally confined plasma. In an axisymmetric equilibrium state we study an incoming wave propagating in the poloidal plane as a linear perturbation, reducing the model to a 2D problem. Different mode conversion processes exist, and this work focuses on mode conversion between the so-called ordinary (O) and extraordinary (X) propagation modes. These propagation modes can be described in terms of components of the wave electric field with respect to the direction of the confining magnetic field : a pure O-mode wave electric field only has a parallel component, while a pure X-mode wave electric field only has perpendicular components. The X-mode wave considered in this work is a left-handed polarized wave.

The cold plasma model corresponds to propagation of time harmonic electromagnetic waves through zero-temperature plasma. Maxwell's equations are combined with a linearized momentum equation for the particle motion in a stationary confining magnetic field $\mathbf b$. The thermal speed is neglected with respect to the wave speed. Even though this work is focused on high frequency waves, this model encompasses a much broader range of wave motion than magneto-hydrodynamic models. The coupling between the electro-magnetic fields and the fluid motion appears via the current generated by particle motion, modeled as a source term in Maxwell's equations. 
Frequencies will be expressed in $\omega$ units while distances will be expressed in  $c/\omega$ units, where $c$ stands for the speed of light in vacuum. The resulting time-harmonic system reads 
\begin{equation}\label{eq:FL}%Faraday Law
\nabla \times \mathbf E = -\imath \mathbf B,
\end{equation}
\begin{equation}\label{eq:AL}%Ampere law
 \nabla \times \mathbf B = \imath \kappa \mathbf E,
\end{equation}
where $\kappa$ is the dielectric tensor for the cold plasma model. It is classically expressed in a right handed orthonormal basis whose third vector is aligned with the magnetic field $\mathbf b$, so in particular in the $(\eun,\ede,\etr)$ basis:
\e{\label{eq:DE}\kappa = \begin{pmatrix}
\kappa_\perp &-\imath \kappa_\wedge & 0 \\
\imath \kappa_\wedge & \kappa_\perp & 0 \\
0&0&\kappa_{\parallel},
\end{pmatrix}} 
where all the coefficients are varying in space.

Following the analysis of the dispersion relation for the cold plasma model, we can describe different cut-offs as surfaces between propagative and evanescent zones for a given type of wave. The corresponding type of wave, impinging from the propagative zone on a cut-off, would then be reflected by the cut-off toward the propagative zone. The O-mode cut-off is the surface defined by the condition $\kappa_{\parallel} = 0$ while the left X-mode cut-off is the surface defined by the condition $\kappa_\perp - \kappa_\wedge = 0$. A pure O-mode wave can only propagate if $\kappa_{\parallel}>0$ while a pure left-handed polarized X-mode wave can only propagate if $\kappa_\perp - \kappa_\wedge>0$. The mode conversion occurs in a neighborhood of the intersecting curve of these two surfaces.
The model derived in this work was developed in   \cite{HaroldConv}, and relies on an expansion in the mode conversion region. The resulting equation, namely \eqref{eq:2ndOFnt}, inherits from the cold plasma model the fact that it has variable coefficients. 

%%%%%%%%%%%%%%%%%%%%%%%%%%%%%%%%%%%%%%%%%%%%%%%%%%%%%%%%%%%%%%%%%%
\subsection{A differential system in the mode conversion region}
This paragraph describes the reduction of the $6\times 6$ system  describing Maxwell's equations \eqref{eq:FL} and \eqref{eq:AL} to a $2\times 2$ system, by eliminating some convenient unknowns. %Hypothesis on the relative orders of the different quantities will be underlined to perform an expansion in the mode conversion region and focus on the leading order terms.
Moreover a further simplification is performed by specifying the phase of the desired solutions.

The first idea is to eliminate the $(E_1,E_2)$ components. To that purpose,  combining the $\ede$ and $\etr$ components of Faraday's law \eqref{eq:FL}, one can show that these two components satisfy
\enn{%\label{eq:E12}
\dpsi (E_1\pm \imath E_2) = (\mathfrak D_1\pm \imath \mathfrak D_2) E_\phi \mp (B_1\pm \imath B_2)\frac{\epsilon r b_t}{b} + \imath B_3\frac{\epsilon r b_p}{b}.
}
As a result, the first two components of %Ampere's law
 \eqref{eq:AL} yield a pair of equations independent of the $E_1$ and $E_2$ components:
\e{\label{eq:AddEq}%ADDitional EQuation
\tab{l}{
\displaystyle (\kappa_\perp\mp\kappa_\wedge)\left[ (\mathfrak D_1\pm \imath \mathfrak D_2)E_\phi \mp (B_1\pm \imath  B_2)\frac{\epsilon r b_t}{b} + \imath B_3\frac{\epsilon r b_p}{b} \right] 
\\ \phantom{vvvvvvvvvvvvvvv} = \pm \dpsi \mathfrak D_3(B_1\pm \imath B_2)+\dpsi (\mp\mathfrak D_1 - \imath \mathfrak D_2)B_3 +O(\epsilon).
}}
To complement the latter into the system that will later be expanded in the mode conversion region, express the third components of Faraday's and Ampere's laws in the orthonormal basis, as well as the divergence free condition for the magnetic field:
\begin{equation}
\label{eq:E3}
\dpsi E_3 = -\imath  B_1\frac{\epsilon r b_p}{b} + \mathfrak D_3 E_\phi,
\end{equation}
\begin{equation}
\label{eq:B12}
\imath \kappa_\parallel E_3 = -\mathfrak D_2B_1 +\mathfrak D_1B_2 +O(\epsilon),
\end{equation}
\begin{equation}
\label{eq:divMF}% DIVergence of the Magnetic Field
\mathfrak D_1B_1+\mathfrak D_2B_2+\mathfrak D_3B_3 +  O(\epsilon) = 0.
\end{equation}

Equations \eqref{eq:AddEq}, \eqref{eq:E3}, \eqref{eq:B12}  and \eqref{eq:divMF} form a simplified system in which the toroidal component of the electric field, $E_\phi$, still appears. 
A series of hypotheses will then allow further simplification  of this system, by scanning the relative orders of the different terms to identify the leading order terms.

%NEED FOR MORE EXPLANATIONS ON THESE HYP
\begin{hp}\label{hp:OM}%Orders of Magnitude
The wave amplitudes of the fields $\mathbf E$ and $\mathbf B$ are expected to vary faster than the equilibrium scale length, but slower than the wave variation $\psi/\epsilon$, $\theta/\epsilon$. More specifically :
\enn{ (\mathbf E,\mathbf B) = (\mathbf E'(\psi\epsilon^{-1/2},\theta\epsilon^{-1/2}),\mathbf B'(\psi\epsilon^{-1/2},\theta\epsilon^{-1/2}))\exp \imath \frac{X(\psi,\theta)+\epsilon N\phi}{\epsilon},}
even if this representation of the magnetic and electric fields is not unique.
\end{hp}

\begin{definition}\label{df:phase}
 The phase will be denoted : $\mathcal X(\psi,\theta,\phi)=\frac{X(\psi,\theta)+\epsilon N\phi}{\epsilon}$. The local wave number is then defined as
\enn{\mathbf k = \epsilon \nabla \mathcal X = \epsilon \nabla \left( \frac{X(\psi,\theta)+\epsilon N\phi}{\epsilon} \right) = \Vpsi \dpsi X + \Vthe \dthe X + \frac{\mathbf e_\phi}{\epsilon r} \epsilon N.}
Taking the derivative of any component of $\mathbf k$ with respect to $\phi$ then reduces to a simple multiplication by $\epsilon N$.
\end{definition}

\begin{hp}\label{hp:WN}% Wave Number
The components of the local wave number that are perpendicular to $\mathbf b$ in the mode conversion region satisfy
\e{\label{eq:OCO}%O Cut Off
k_1 = O(\sqrt \epsilon) \text{ and } k_2 = O(\sqrt \epsilon),
}
while the parallel component $k_3 = k_\parallel$ satisfies
\e{\label{eq:XCO}%X Cut Off
k_3^2 = 1 - \frac{\omega_{pe}^2}{1 -\Omega_{ce}},}
which corresponds to the usual X-mode cut-off condition 
\begin{equation}\label{eq:Xmco}%XModeCutOff
\kappa_\perp - \kappa_\wedge- k_3^2 = O(\sqrt \epsilon).
\end{equation}
Moreover the phase $\mathcal X$ is of order $1$.
\end{hp}\label{hp:cp}%components
\begin{hp}\label{hp:BsE}
The components $E_\phi$, $B_1$ and $B_2$ scale as $O(1)$.
\end{hp}

Thanks to the order of the two first components of the wave number, see Equation \eqref{eq:OCO} from Hypothesis \ref{hp:WN}, then $k_3 = \frac{\epsilon N b}{\epsilon r b_t} + O(\sqrt{\epsilon})$. As a result $\mathfrak D_1$ and $\mathfrak D_2$ are operators of order $O(\sqrt \epsilon)$, while $\mathfrak D_3$ is  of order $O(1)$. Then  since Equations \eqref{eq:divMF} and \eqref{eq:AddEq} show that $(\kappa_\perp \mp \kappa_\wedge - k_3^2)(B_1\pm \imath B_2) = O(\sqrt \epsilon)$, Equation \eqref{eq:XCO} from Hypothesis \ref{hp:WN} provides a way to eliminate the $B_2$ component since:
\enn{ B_2= -\imath B_1 +  O(\sqrt \epsilon).}

Two more unknowns can be eliminated thanks to Equations \eqref{eq:E3} and \eqref{eq:divMF}:
\enn{k_3 E_\phi = \epsilon N E_3 + \frac{\epsilon r b_p}{b}B_1 + O(\sqrt \epsilon),}
\enn{\imath k_3 B_3 = -\mathfrak D_2B_2 - \mathfrak D_1 B_1 + O(\epsilon).}

At this point, $B_2$, $E_\phi$ and $B_3 $ are expressed explicitly in terms of the two last unknowns, namely $E_3$ and $B_1$, while $E_3$ and $B_1$ satisfy an $2\times 2 $ differential system independent of the other unknowns. A new unknown is finally defined in order to obtain a diagonal differential operator :
\begin{definition}
Define the new variable $\tilde{E_3} = E_3 + 2\epsilon r b_p / (b\epsilon N) B_1$.
\end{definition}
As a result, we now get the following system:
\syst{l}{\label{sys:fexp}%Finally EXPanded
\displaystyle (\mathfrak D_1 - \imath \mathfrak D_2)B_1  =-\kappa_\parallel \tilde{E_3} + 2\kappa_\parallel\frac{\epsilon r b_p}{b\epsilon N} B_1 + O(\epsilon),\\
\displaystyle  (\mathfrak D_1 + \imath \mathfrak D_2)\tilde{E_3} = -2\kappa_\parallel \frac{\epsilon r b_p}{b\epsilon N} \tilde{E_3} \\\displaystyle 
\phantom{ (\mathfrak D_1 + i\mathfrak D_2)\tilde{E_3} =}
 +\left\{ 2\kappa_\parallel \left(\frac{\epsilon r b_p}{b\epsilon N}\right)^2 + (\kappa_\perp - \kappa_\wedge-k_3^2) \frac{\epsilon r b_t}{k_3b\epsilon N}
\right\}2B_1  + O(\epsilon).
}

%%%%%%%%%%%%%%%%%%%%%%%%%%%%%%%%%%%%%%%%%%%%%%%%%%%%%%%5
%\subsection{A specific choice of phase function}
The key point of the next step of the simplification process is to specify the phase function of the electric and magnetic fields introduced in Definition \ref{df:phase}, in order to cancel some of the differential operators terms and obtain a simpler system on the amplitude of the unknowns $\tilde E_3$ and $B_1$. Notice that specifying the phase function actually means looking for only a class of solutions to System \eqref{sys:fexp}.

In order to simplify the $\mathfrak D_2$ term, for any component $A$ of the electric or magnetic field, $A'$ standing for the amplitude of $A$. From the definition of the differential operator, one has
 \enn{\mathfrak D_2 A = \frac{b_t^2}{Qb\left| \Vpsi \right| } \left( -\left({\sqrt \epsilon} \dthe A' + \imath \partial_\theta X A'\right) +\frac{Qb_p^2}{b_t^2} \imath \epsilon N A'\right)e^{\mathcal X} ,}
 so that, choosing $\partial_\theta X =  \left.\frac{\epsilon NQ b_p^2}{b_t^2}\right|_{(0,0)}$, it becomes $\mathfrak D_2 A =- \frac{b_t^2}{Qb\left| \Vpsi \right| }{\sqrt \epsilon} \dthe A'e^{\mathcal X}+O\left(\sqrt{\epsilon}\right)$. Indeed the scaling introduced in Definition \ref{df:phase} ensures that $\sqrt{\epsilon}\dthe A' =O(1)$ while the remaining terms scale as $O((\psi,\theta)) = o(1)$. Following the same idea, the equivalent simplification of $\mathfrak D_1$ is obtained by setting $\eun\cdot \Vpsi \dpsi X + \eun\cdot \Vthe \dthe X = 0$. Choosing the corresponding value of $\dpsi X$, it yields $\mathfrak D_1 A = \sqrt \epsilon \eun\cdot\left( \Vpsi \dpsi A' + \Vthe \dthe A' \right)e^{\mathcal X}+o(1)$. To summarize:

\begin{definition}
From now on, the solutions of \eqref{sys:fexp} are more specifically sought with 
\enn{X(\psi,\theta) =X(0,0) +  \left.\frac{\epsilon NQ b_p^2}{b_t^2}\right|_{(0,0)}\theta - \left.\left(\eun\cdot\Vthe\left| \Vpsi \right|^{-1} \frac{\epsilon NQ b_p^2}{b_t^2}\right)\right|_{(0,0)} \psi.}
%Note that this is equivalent to multiplying the amplitudes $E'$ and $B'$ by $e^{if\left( \frac \psi{\sqrt \epsilon}, \frac \theta{\sqrt \epsilon},\epsilon \right)}$.
\end{definition}

At last the definitions of the O and X mode cut-offs in the mode conversion model lead to the final simplification idea:
\begin{definition}  
The O and X-mode cut-offs are respectively defined by 
$$\kappa_\parallel(\psi/ \sqrt \epsilon) = O(\sqrt \epsilon) \text{ and }
\big[\kappa_\perp - \kappa_\wedge- k_3^2\big](\psi/ \sqrt \epsilon,\theta / \sqrt \epsilon) = O(\sqrt \epsilon),$$
see Hypothesis \ref{hp:WN}.
 The mode conversion point is uniquely defined as the point where 
$\kappa_\parallel = \kappa_\perp-\kappa_\wedge-k_3^2=0$.
\end{definition}
 Notice that each of the coefficients appearing in the right hand side of System \eqref{sys:fexp} is proportional to a quantity vanishing at one cut-off: either $\kappa_\parallel$ or $\kappa_\perp - \kappa_\wedge-k_3^2$. A spontaneous change of variables and the subsequent differential operators are then:
\begin{definition}\label{df:cov+cst}% change of variables +  constants
The constant $c_1$ is defined by the local expansion $- 2\kappa_\parallel= \sqrt \epsilon c_1 \psi / \sqrt \epsilon$, and $c_2$ and $c_3$ are defined by $ 2\left(\frac{\epsilon r b_p}{b\epsilon N}\right)^2\kappa_\parallel + \frac{\epsilon r b_t}{b k_3\epsilon N} (\kappa_\perp - \kappa_\wedge-k_3^2) = \sqrt \epsilon \left( c_2 \psi / \sqrt \epsilon +  c_3 \theta / \sqrt \epsilon\right)$. In other words
$$
%\left\{\begin{array}{l}
c_1 = \frac{-2 \dpsi \kappa_\parallel(0)}{\sqrt \epsilon}, \ %\
c_2 = \frac{\dpsi F(0,0)}{\sqrt \epsilon}, \ %\
c_3 = \frac{\dthe F(0,0)}{\sqrt \epsilon} ,
%\end{array}\right.
$$
where $F(\frac{\psi}{\sqrt\epsilon},\frac{\theta}{\sqrt\epsilon}) = 2\left(\frac{\epsilon r b_p}{b\epsilon N}\right)^2\kappa_\parallel + \frac{\epsilon r b_t}{b k_3\epsilon N}(\kappa_\perp - \kappa_\wedge-k_3^2) $.

Define the change of variables
\sysnn{l}{
y' = (\theta + c_2 \psi / c_3 ) \epsilon^{-1/2}, \\
x' = \psi \epsilon^{-1/2},
}
 while $ \alpha = \left( \frac{\epsilon r b_p}{b\epsilon N} \right)_{(0,0)}$ and the constants are $d_1'$, $d_2'$ and $d_3'$ defined by the related operators
\sysnn{rl}{
\mathfrak D_1' &\displaystyle =  \eun\cdot \left( \Vpsi \partial_{x'} + \left(\Vpsi \frac{c_2}{c_3} + \Vthe\right) \partial_{y'}  \right) = d_1'\partial_{x'} + d_2' \partial_{y'} , \\
% &\displaystyle  = d_1'\partial_{x'} + d_2' \partial_{y'} \\
\mathfrak D_2' &\displaystyle = - \left( \frac{b_t^2}{Qb}\left|\Vpsi\right|^{-1}  \right)_{(0,0)} \partial_{y'}  = d'_3 \partial_{y'}.
}
\end{definition}

Thanks to these definitions, the leading order terms of System \eqref{sys:fexp} give the following system
\sysnn{rl}{
 (\mathfrak D'_1 + \imath \mathfrak D'_2)\tilde{E_3'} &\displaystyle =  c_1 x' \alpha \tilde{E_3'} + c_3 y' (2B_1'),\\
(\mathfrak D'_1 - \imath \mathfrak D'_2)(2B_1') &\displaystyle =  c_1 x' \tilde{E_3'} -  c_1 x'\alpha (2B_1'),
}
where it is crucial that the unknowns are now the amplitudes of the original unknowns.
For the sake of clarity, another rescaling change of variables is introduced by the following definition.
\begin{definition}\label{df:Fcov+cst}%Final change of variables +  constants
Define $\tilde{e}_1$ and $\tilde{e}_2$ such that $\tilde{e}_1  = \sqrt{\frac{d_1'}{\alpha c_1}}$ and $\tilde{e}_2 = \frac{\alpha d_1'}{c_3\tilde{e}_1}$.
Define a scaling of variables and unknowns by
\sysnn{l}{%\label{sys:cov+cu}%change of variables and unknowns
x' = \tilde{e}_1 x \text{ and } y' = \tilde{e}_2 y, \\
2\alpha B_1' = \mathsf B \text{ and } \tilde E_3' = \mathsf E,
}
together with 
$d_2 = \frac{d'_2\tilde{e}_1}{d'_1\tilde{e}_2}$, $d_3 =  \frac{d'_3\tilde{e}_1}{d'_1\tilde{e}_2}$ and $d = d_2+\imath d_3$.
%$d_2 = \frac{d'_2c_3}{d'_1\alpha^2 c_1}$, $d_3 =  \frac{d'_3c_3}{d'_1\alpha^2 c_1}$ and $d = d_2+id_3$.
\end{definition}

The resulting system finally reads:
\begin{equation}\label{sys:DDbar}
\left\{\tab{rl}{
(\partial_x +d\partial_y) \mathsf E &=x\mathsf E+y\mathsf B, \\
(\partial_x +\overline d\partial_y) \mathsf B &=x\mathsf E-x\mathsf B.
}\right.
\end{equation}
%Poynting relation
%\begin{equation}
%(\partial_x +d\partial_y) (\overline B E ) = yB\overline B +x %E\overline E
%\end{equation}
%Energy flux relation
%\begin{equation}
%\Im \left[ (\partial_x +d\partial_y) (\overline B E ) \right] = 0
%\end{equation}
It is a $2\times 2$ differential system model in the mode conversion region. An important feature of this system is that, in the $(x,y)$ coordinates, the cut-offs are defined by $x=0$ and $x+y=0$. It is a direct consequence of the fact that in the intermediate variables they respectively correspond to $x'=0$ and $c_1\alpha^2 x'+c_3y' = 0$,  combined with the final rescaling. As to the mode conversion region, it is the neighborhood of the point $\{x=0\}\cap \{x+y=0\}$, that is to say the origin $(x,y)=(0,0)$.

Since from now on the only remaining parameter is $d\in\mathbb C$, define $d_i = \Im (d)$ and $d_r = \Re (d)$.

%%%%%%%%%%%%%%%%%%%%%%%%%%%%%%%%%%%%%%%%%%%%%%%
\subsection{A second order equation}
In order to write System \eqref{sys:DDbar} as a single second order equation, one would naturally eliminate one of the components and obtain one of the following equations:
\begin{equation*}
(\partial_x +\overline d\partial_y + x )
  \left(\frac1 y(\partial_x +d\partial_y) \mathsf E \right)
=(\partial_x +\overline d\partial_y + x )
  \left(\frac x y\mathsf E\right)+x\mathsf E,
\end{equation*}
\begin{equation*}
(\partial_x +d\partial_y - x )
  \left(\frac1 x(\partial_x +\overline d\partial_y) \mathsf B \right)
=y\mathsf B -(\partial_x +d\partial_y - x )
  \mathsf B.
\end{equation*}
However both of these equations have an artificial singularity at the mode conversion point. Obtaining a well-behaved second order equation for System \eqref{sys:DDbar} then requires further investigation. 

Looking for a physical solution, one can perform a change of variables involving the exponential of a quadratic form, as proposed in   \cite{HaroldConv} (Equation (54)). Suppose that a quadratic form $\mathsf Q(x,y) = \frac 1 2 (K x^2 + 2L xy + M y^2)$ satisfies $(\mathsf E,\mathsf B) = (\tilde{\mathsf E},\tilde{\mathsf B}) e^{\mathsf  Q(x,y)}$. The idea is to determine $\mathsf Q$ such as to simplify the differential system satisfied by the amplitude functions $(\tilde{\mathsf E},\tilde{\mathsf B})$ and such that $\Re \mathsf Q(x,y) \leq 0$ to guarantee a physical behavior at infinity. 

Starting from System \eqref{sys:DDbar}, it is straightforward to see that the amplitudes $(\tilde{ \mathsf E},\tilde{\mathsf B})$ satisfy
\begin{equation}\label{sys:DDbartilde}
\left\{\tab{rl}{
(\partial_x +d\partial_y) \tilde{\mathsf E} &=((1-K-dL)x-(L+dM)y)\tilde{\mathsf E}+y\tilde{\mathsf B}, \\
(\partial_x +\overline d\partial_y) \tilde{\mathsf B} &=x\tilde{\mathsf E}-((1+K+\overline d L)x+ (L+\overline d M)y)\tilde{\mathsf B}.
}\right.
\end{equation}
A simplification of the right hand side is then performed thanks to an adequate set of constants $(K,L,M)$, setting
\begin{equation}\label{eq:simpl0}
L+dM = 1+K+\overline d L = 0,
\end{equation}
\begin{equation}\label{eq:simplmu}
1/\mu = K+dL-1 = 1/(L+\overline d M),
\end{equation}
so that System \eqref{sys:DDbartilde} becomes
\begin{equation}\label{sys:EBmu}
\left\{\tab{rl}{
(\partial_x +d\partial_y) (\mu\tilde{\mathsf E}) &=- x\tilde{\mathsf E}+\mu y\tilde{\mathsf B}, \\
(\partial_x +\overline d\partial_y) \tilde{\mathsf B} &=x\tilde{\mathsf E}- \mu y\tilde{\mathsf B}.
}\right.
\end{equation} 
From equations \eqref{eq:simpl0} and \eqref{eq:simplmu}, it is straightforward that $d\mu = 2+1/\mu$,
\begin{equation}\label{eq:KLM}
K = \left(-1+\frac{d \overline{d}\mu}{\overline{d}-d}\right), \
L =  - \frac{d\mu}{\overline{d}-d} \text{ and }
M = \frac{\mu}{\overline{d}-d}.
\end{equation}
So the quadratic form reads
$ %\begin{equation}
\mathsf Q (x,y) = \frac 1 2\left(\left(-1+\frac{d \overline{d}\mu}{\overline{d}-d}\right) x^2 - 2\frac{d\mu}{\overline{d}-d}xy + \frac{\mu}{\overline{d}-d}y^2\right)$. The physical behavior of the solution at infinity is linked to the sign of $\Re \mathsf Q$,  since if $\Re \mathsf Q \leq 0$  there is no propagation for $|(x,y)|\rightarrow \infty$. 
A little more algebra then leads to
\begin{equation*}
2\Re \mathsf Q (x,y)   =  -\frac{\mu_i}{2d_i} \left(y+\frac{x}{|\mu|^2} \right)^2  =  \frac{1}{2d_i|\mu|^2}\Im \frac 1 \mu \left(|\mu|^2y+x \right)^2.
\end{equation*}
The parameter $\mu$ satisfies the second degree polynomial equation $d\mu^2-2\mu-1 = 0$. The two roots of the polynomial $X^2+2X-d$ are $(1/\mu)_\pm= -1\pm \sqrt{1+d}$, and the inequality $\Re \mathsf Q \leq 0$ is satisfied if and only if $d_i$ and $\Im \frac 1 \mu$ have opposite signs. Since the sign of $\Im \sqrt{1+d}$ is the sign of $d_i$, it is then clear that the root $(1/\mu)_- = -1- \sqrt{1+d}$ ensures the desired estimate at infinity for $\Re \mathsf Q$.
The following lemma summarizes this result.

\begin{lemma}
Given the complex numbers $(K_\pm,L_\pm,M_\pm)$ satisfying  \eqref{eq:KLM} with $1/\mu$ computed as $(1/\mu)_\pm$, and the corresponding quadratic forms $\mathsf Q_\pm(x,y) = \frac 1 2 (K_\pm x^2 + 2L_\pm xy + M_\pm y^2)$. The form $\mathsf Q_-$ corresponding to the root $(1/\mu)_- = -1-\sqrt{1+d}$ satisfies 
\begin{equation}\label{eq:ReQineg}
\Re \mathsf Q_- (x,y) \leq 0, \quad \forall (x,y)\in \mathbb R^2.
\end{equation}
As a result the function $ \exp \mathsf Q_-(x,y)$ is bounded.
\end{lemma}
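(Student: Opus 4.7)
The plan is to exploit the identity already derived in the paragraph preceding the lemma, namely
\begin{equation*}
2\Re \mathsf Q(x,y) = \frac{1}{2 d_i |\mu|^2}\, \Im\!\frac{1}{\mu}\, \bigl(|\mu|^2 y + x\bigr)^2,
\end{equation*}
which is valid for either root $(1/\mu)_\pm$ (note $d_i \neq 0$ is implicit, since otherwise the system \eqref{eq:simpl0}--\eqref{eq:simplmu} is degenerate because $\overline d - d = 0$ in \eqref{eq:KLM}). Since $(|\mu|^2 y + x)^2 \geq 0$ for real $(x,y)$ and $|\mu|^2 > 0$, the sign of $\Re \mathsf Q$ is entirely controlled by the sign of the ratio $\Im(1/\mu)/d_i$. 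So the proof reduces to a sign check on $\Im(1/\mu)_-$.

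First I would expand $(1/\mu)_- = -1 - \sqrt{1+d}$ (principal branch) to get $\Im(1/\mu)_- = -\Im \sqrt{1+d}$. The key elementary fact is that for any $z \in \mathbb{C}$ not on the non-positive real axis, the principal square root satisfies $\Re \sqrt{z} > 0$, hence from $2(\Re\sqrt z)(\Im\sqrt z) = \Im z$ one gets $\mathrm{sign}(\Im \sqrt z) = \mathrm{sign}(\Im z)$. Applying this with $z = 1+d$ gives $\mathrm{sign}(\Im \sqrt{1+d}) = \mathrm{sign}(d_i)$, and therefore $\mathrm{sign}\bigl(\Im(1/\mu)_-\bigr) = -\mathrm{sign}(d_i)$. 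Consequently $\Im(1/\mu)_-/d_i \leq 0$, so the displayed identity yields $\Re \mathsf Q_-(x,y) \leq 0$ for all $(x,y) \in \mathbb{R}^2$, which is \eqref{eq:ReQineg}.

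The boundedness of $\exp \mathsf Q_-$ is then immediate: $|\exp \mathsf Q_-(x,y)| = \exp(\Re \mathsf Q_-(x,y)) \leq e^0 = 1$ on all of $\mathbb R^2$.

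The only delicate point is the degenerate case in which $1+d$ lies on the non-positive real axis, i.e.\ $d_i = 0$ and $d \leq -1$; this is excluded by our standing hypothesis $d_i \neq 0$ (which is forced by the definition of $d_3$ from $d'_3 \neq 0$ in Definition \ref{df:Fcov+cst}), so the principal-branch formula $\mathrm{sign}(\Im\sqrt z) = \mathrm{sign}(\Im z)$ applies without further comment. I do not expect any serious obstacle beyond bookkeeping the branch choice of the square root.
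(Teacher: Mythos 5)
Your proof is correct and follows essentially the same route as the paper: it uses the identity $2\Re \mathsf Q = \frac{1}{2d_i|\mu|^2}\Im\frac1\mu\,(|\mu|^2y+x)^2$ and reduces the claim to checking that $\Im(1/\mu)_-$ and $d_i$ have opposite signs, which the paper asserts via "the sign of $\Im\sqrt{1+d}$ is the sign of $d_i$" and you justify explicitly through the principal-branch identity $2(\Re\sqrt z)(\Im\sqrt z)=\Im z$. The only difference is that you spell out this elementary branch argument and the exclusion of the degenerate case $d_i=0$, which the paper leaves implicit.
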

So define now $(\tilde{\mathsf  E},\tilde{ \mathsf B})=(\mathsf  E,\mathsf  B) \exp (-\mathsf Q_-(x,y))$. Since the two right hand sides of System \eqref{sys:EBmu} are the same up to a multiplicative constant, a divergence free condition holds :
$
\partial_x(\mu \tilde{\mathsf  E}+\tilde{\mathsf  B}) + \partial_y(\mu d \tilde E + \overline d  \tilde{\mathsf  B}) = 0
$. As a result there is a potential $\varphi$ such that $-\partial_y \varphi = \mu \tilde{\mathsf  E}+\tilde{\mathsf  B}$ and $\partial_ x \varphi  = \mu d \tilde{\mathsf  E} + \overline d \tilde{\mathsf  B}$. 
For the sake of simplicity, define the differential operator $\mathfrak D  =\partial_x +d\partial_y $. Then $\tilde F = \frac{\varphi}{\mu(d-\overline d)}$ satisfies $\tilde{\mathsf  E}  = \overline{\mathfrak D}\tilde F$, $\tilde{\mathsf  B}  = -\mu\mathfrak  D\tilde F$ and the second order differential equation 
\begin{equation}
\label{eq:2ndOF}
\mathfrak D\overline{ \mathfrak D} \tilde F +\frac{x}{\mu}\overline{\mathfrak  D} \tilde F + \mu y \mathfrak D \tilde F = 0.
\end{equation}
At this point this work finally diverges from   \cite{HaroldConv}. Since we wanted a well-behaved second order equation for System \eqref{sys:DDbar}, we can undo the $e^{\mathsf Q}$ transformation, to get an equation for $F = \tilde F \exp \mathsf Q_-(x,y)$. 
From \eqref{eq:KLM} stems that $\mathfrak D\mathsf Q = \left(1+\frac{1}{\mu}\right)x$, $\overline{ \mathfrak D} \mathsf Q = \mu y -x$ and $\mathfrak D\overline{ \mathfrak D }\mathsf Q = d\mu-1 = 1+\frac{1}{\mu}$. The definition of $F$ implies that
$\mathfrak D\tilde F = (\mathfrak DF-F\mathfrak D\mathsf Q)e^{-\mathsf Q}$ and 
$\overline{\mathfrak  D}\tilde F = (\overline{\mathfrak  D}F-F\overline{ \mathfrak D}\mathsf Q)e^{-\mathsf Q}$. Then \eqref{eq:2ndOF} directly gives
\begin{equation*}%\label{eq:2ndOFnt}% Second Order F no tilde
\mathfrak  D\overline{ \mathfrak D} F + 2\imath d_ix\partial_y F-\left(1+\frac{1}{\mu}+x(x+y)\right)F = 0,
\end{equation*} 
which is precisely Equation \eqref{eq:2ndOFnt}.

%%%%%%%%%%%%%%%%%%%%%%%%%%%%%%%%%%%%%%%%%%%%%%%%%%%%%%%%%
\section{Theoretical study}%/Weak formulation}%%%%%%%%%%%%%
\label{sec:TS}
%%%%%%%%%%%%%%%%%%%%%%%%%%%%%%%%%%%%%%%%%%%%%%%%%%%%%%%%%
In order to work on a well-posed problem, a weak formulation will be derived focusing on coercivity and symmetry properties.  Remember that $d=d_r+\imath d_i$ is a complex constant and that $1/\mu = -1-\sqrt{1+d}$.

Consider the second order term of the elliptic equation  \eqref{eq:2ndOFnt}. The corresponding differential operator can be written in a divergence form, as  
\begin{equation}\label{eq:DDbarM}
\mathfrak  D\overline{ \mathfrak D} = (\dx + d \dy)(\dx+\overline d \dy) 
= \dx^2 + (d+\overline d)\dx\dy +|d|^2 \dy^2 
= \nabla\cdot\left(\begin{pmatrix}
1 & d \\ \overline d & |d|^2
\end{pmatrix}\nabla \right).
\end{equation}
Since the eigenvalues of the matrix $M = \begin{pmatrix}
1 & d \\ \overline d & |d|^2
\end{pmatrix}$ are $0$ and $1+|d|^2$, the lower bound estimate associated with the form \eqref{eq:DDbarM} reads: $\forall X \in\mathbb C^2$,
$
MX\cdot X \geq \min(0,1+|d|^2) \|X\|^2
$. 
But because $\min(0,1+|d|^2) =0$ this is not an ellipticity condition for the  $\mathfrak  D\overline{ \mathfrak D}$ operator.
 However this operator can also be written 
\begin{equation}\label{eq:DDbarA}
 \mathfrak  D\overline{ \mathfrak D}= (\dx + d \dy)(\dx+\overline d \dy) 
= \dx^2 + 2d_r\dx\dy +|d|^2 \dy^2
  = \nabla\cdot\left(\begin{pmatrix}
1 & d_r \\ d_r & |d|^2
\end{pmatrix}\nabla \right).
\end{equation}
Since the eigenvalues of the matrix $A = \begin{pmatrix} 1 & d_r \\ d_r & |d|^2\end{pmatrix}$ are  
$$\lambda_\pm=\frac{1+|d|^2\pm\sqrt{(1-|d|^2)^2+4d_r^2}}{2} = \frac{1+|d|^2\pm\sqrt{(1+|d|^2)^2-4d_i^2}}{2} ,$$
 the lower bound estimate associated with the form \eqref{eq:DDbarA} reads: $\forall X \in\mathbb C^2$,
$
AX\cdot X \geq \lambda_- \|X\|^2
$. 
Consequently it is sufficient to suppose that $d_i\neq 0$ for the latter to be an ellipticity condition for the $\mathfrak  D\overline{ \mathfrak D}$ operator.
So the weak formulation will be based on the identity  $\mathfrak  D\overline{ \mathfrak D} u = \nabla\cdot(A\nabla u) $. 

As to the first order term of \eqref{eq:2ndOFnt} it will be split in the weak formulation in a symmetric way, thanks to the fact that it can be written as  $2\imath d_i \mathbf v \cdot \nabla F$ with $\mathbf v = \begin{pmatrix} 0\\x\end{pmatrix}$. Indeed, this vector field $\mathbf v$ is such that the $x$ component does not depend on the $x$ variable and the $y$ component does not depend on the $y$ variable. So this first order term can be symmetrized as $\int_\Omega (\mathbf v \cdot \nabla F) \overline \varphi = \frac 12 \int_\Omega (\mathbf v \cdot \nabla F)\overline \varphi + \frac 12 \int_\Omega F  (\mathbf v \cdot \nabla\overline \varphi)$. 
Moreover, this justifies the choice of Equation \eqref{eq:2ndOFnt} over \eqref{eq:2ndOF} since in the latter the first order terms
$
\int_\Omega \frac x \mu \overline{\mathfrak D} \tilde F \cdot \overline{\varphi}$ and 
$
\int_\Omega  \mu y {\mathfrak D} \tilde F \cdot \overline{\varphi}$ do not have such a simple symmetric formulation.
According to these ideas, complement Equation \eqref{eq:2ndOFnt} with an appropriate boundary condition:
\begin{equation}\label{sys:THEsys}
\left\{\begin{array}{ll}
\nabla\cdot (A\nabla u) + 2\imath d_ix\partial_y u-\left(1+\frac{1}{\mu}+x(x+y)\right)u = 0,& (\Omega),\\
 \nu \cdot (A\nabla u) + \imath d_ix\nu_y u  +\imath \sigma u = g,& (\partial\Omega),
\end{array} \right.
\end{equation}
$\Omega$ being a Lipschitz domain around the mode conversion point, i.e. $(0,0)\in\Omega$, $\sigma$ being  positive constant, and $1/\mu = -1-\sqrt{1+d}$.

\begin{definition}
Consider a Lipschitz domain $\Omega$ around the mode conversion point, i.e. $(0,0)\in\Omega$.
A weak solution of the partial differential system \eqref{sys:THEsys} is: for all $v\in H^1(\Omega)$
\begin{equation}\label{eq:WF}% weak formulation
\begin{array}{l}
\displaystyle \int_\Omega A\nabla u 
\cdot \nabla \overline v 
-\imath d_i\int_\Omega x\left( \partial_y u  \overline v - u  \partial_y\overline v \right)
+\int_\Omega c(x,y) u  \overline v
+\imath \sigma \int_{\partial \Omega}
u\overline v
= \int_{\partial \Omega} g \overline v ,
\end{array}
\end{equation}
where $u\in H^1(\Omega)$ and $c(x,y) = 1+\frac 1\mu+x(x+y)$. 
\end{definition}

Under a single hypothesis on the imaginary part of the parameter $d$, the following theorem states the well-posedness of \eqref{eq:WF}.
\begin{theorem}
Consider a Lipschitz domain $\Omega$.
Suppose $d_i < 0$ and $\sigma>0$ ; then there exists a unique solution $u\in H^1(\Omega)$ to the weak formulation \eqref{eq:WF}.
\end{theorem}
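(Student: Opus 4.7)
The approach is to apply the complex Lax-Milgram theorem to the sesquilinear form $a(u,v)$ defined by the left-hand side of \eqref{eq:WF} and the antilinear form $L(v)=\int_{\partial\Omega}g\bar v$. Continuity of $a$ on $H^1(\Omega)\times H^1(\Omega)$ is routine: the diffusion term is bounded by $\|A\|\,\|u\|_{H^1}\|v\|_{H^1}$, the first- and zeroth-order terms by boundedness of $x$ and $c$ on the Lipschitz domain $\Omega$, and the boundary term by continuity of the trace $H^1(\Omega)\to L^2(\partial\Omega)$; $L$ is continuous as soon as $g$ lies in $L^2(\partial\Omega)$. The crux is coercivity, which I would establish by combining $\mathrm{Re}\,a(u,u)$ and a large positive multiple of $\mathrm{Im}\,a(u,u)$.

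The whole point of the drift symmetrization in \eqref{eq:WF} is that, on the diagonal,
\[
-\imath d_i \int_\Omega x(\partial_y u\,\bar u - u\,\partial_y\bar u) \;=\; 2 d_i \int_\Omega x\,\mathrm{Im}(\partial_y u\,\bar u) \;\in\; \mathbb R,
\]
so this contribution does not enter $\mathrm{Im}\,a(u,u)$. Writing $c(x,y)=-\sqrt{1+d}+x(x+y)$ from $1/\mu=-1-\sqrt{1+d}$, the imaginary part of $c$ is the constant $-\mathrm{Im}\sqrt{1+d}$; under the principal branch of $\sqrt{\cdot}$, $\mathrm{Im}\sqrt{1+d}$ shares the sign of $d_i$, so the hypothesis $d_i<0$ gives $-\mathrm{Im}\sqrt{1+d}>0$. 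Consequently
\[
\mathrm{Im}\,a(u,u) \;=\; -\mathrm{Im}\sqrt{1+d}\,\|u\|_{L^2(\Omega)}^{2} + \sigma\|u\|_{L^2(\partial\Omega)}^{2}
\]
is strictly positive definite in $u$ and its trace. For the real part, the matrix ellipticity already established in the excerpt gives $\int A\nabla u\cdot\nabla\bar u\geq\lambda_-\|\nabla u\|_{L^2}^{2}$ with $\lambda_->0$ (precisely the spot where $d_i\ne 0$ is needed), Young's inequality bounds the drift contribution by $\varepsilon\|\nabla u\|_{L^2}^{2}+C_\varepsilon\|u\|_{L^2}^{2}$, and the sign-indefinite $\mathrm{Re}\,c$ costs at worst $-C'\|u\|_{L^2}^{2}$; together these yield $\mathrm{Re}\,a(u,u)\geq(\lambda_--\varepsilon)\|\nabla u\|_{L^2}^{2}-C''\|u\|_{L^2}^{2}$.

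Choosing $\beta>0$ large enough that $\beta(-\mathrm{Im}\sqrt{1+d})>C''$, the combination $\mathrm{Re}\,a(u,u)+\beta\,\mathrm{Im}\,a(u,u)$ bounds $c\,\|u\|_{H^1(\Omega)}^{2}$ from below, and the elementary inequality $|a(u,u)|\geq(\mathrm{Re}\,a(u,u)+\beta\,\mathrm{Im}\,a(u,u))/\sqrt{1+\beta^2}$ yields coercivity; Lax-Milgram then produces the unique $u\in H^1(\Omega)$. I expect the main obstacle to be algebraic rather than analytic: one must pin down the sign of $\mathrm{Im}\sqrt{1+d}$ via the branch choice $1/\mu=-1-\sqrt{1+d}$ so that $\mathrm{Im}\,a(u,u)$ is correctly signed, and verify that the symmetrization truly removes the drift from the imaginary part — otherwise an indefinite term would appear there and the large-$\beta$ absorption that kills the $-C''\|u\|_{L^2}^2$ defect would fail.
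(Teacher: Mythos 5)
Your argument is correct, but it takes a genuinely different route from the paper's. The paper does not prove coercivity of the full sesquilinear form: it introduces an intermediate problem in which $c(x,y)$ is replaced by a large constant $\lambda$, proves $\Re a(u,u)\geq \lambda_-\|\nabla u\|^2-|d_i|\max_{\Omega}|x|\,\|\nabla u\|\|u\|+\lambda\|u\|^2\geq \tfrac{\min(\lambda_-,\lambda)}{4}\|u\|_{H^1}^2$ for that shifted form, solves it by Lax--Milgram, and then writes the original problem as $(I-T((\lambda+c)\cdot,0))u=T(0,g)$ with $T$ compact via $H^1(\Omega)\hookrightarrow L^2(\Omega)$; the Fredholm alternative reduces existence to uniqueness, and uniqueness is read off from the imaginary part of the homogeneous identity, $\Im\tfrac1\mu\int_\Omega|u|^2+\sigma\int_{\partial\Omega}|u|^2=0$, using precisely the sign fact $\Im\sqrt{1+d}$ has the sign of $d_i$ (which the paper records in Section 2, so your branch worry is settled). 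You instead feed that same positivity directly into coercivity: since the symmetrized drift is real on the diagonal, $\Im a(u,u)=\Im\tfrac1\mu\|u\|_{L^2(\Omega)}^2+\sigma\|u\|_{L^2(\partial\Omega)}^2$ with $\Im\tfrac1\mu=-\Im\sqrt{1+d}>0$ when $d_i<0$, and the rotation $|a(u,u)|\geq(\Re a(u,u)+\beta\,\Im a(u,u))/\sqrt{1+\beta^2}$ with $\beta$ large absorbs the $-C''\|u\|_{L^2}^2$ defect of the real part. Both proofs hinge on the same two ingredients ($\lambda_->0$, i.e.\ $d_i\neq0$, for the principal part; the sign of $\Im\sqrt{1+d}$ for the zeroth-order term), but yours is more economical and yields a quantitative stability bound $\|u\|_{H^1(\Omega)}\leq C\|g\|_{L^2(\partial\Omega)}$ that the Fredholm route does not give directly, while the paper's coercive-plus-compact decomposition is the more robust template when the absorption $\Im\tfrac1\mu>0$ is unavailable and only a separate uniqueness argument can be supplied.
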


\begin{proof} The idea is to first deal with the first order term and then treat the whole problem as a coercive plus compact decomposition.

Define the intermediate problem 
\begin{equation}\label{eq:interWF}%intermediate Weak Formulation
\begin{array}{l}
\displaystyle \int_\Omega A\nabla u 
\cdot \nabla \overline v 
-\imath d_i\int_\Omega x\left( \partial_y u  \overline v - u  \partial_y\overline v \right)
+\lambda\int_\Omega u  \overline v
+\imath \sigma \int_{\partial \Omega}
u\overline v \\\displaystyle 
\phantom{\displaystyle \int_\Omega A\nabla u 
\cdot \nabla \overline v 
-\imath d_i\int_\Omega x\left( \partial_y u  \overline v - u  \partial_y\overline v \right)
+\lambda\int_\Omega u  \overline v}
= \int_{\partial \Omega} g \overline v + \int_{\Omega} f \overline v ,
\end{array}
\end{equation}
where the parameter $\lambda$ is to be tuned to ensure the well-posedness of this problem. Define
\begin{equation}
a(u,v) = \int_\Omega A\nabla u 
\cdot \nabla \overline v 
-\imath d_i\int_\Omega x\left( \partial_y u  \overline v - u  \partial_y\overline v \right)
+\lambda\int_\Omega  u  \overline v
+\imath \sigma \int_{\partial \Omega}
u\overline v.
\end{equation}
It is then classical to write
\begin{equation}
\begin{array}{rl}
\Re a(u,u) &\displaystyle = \int_\Omega A\nabla u 
\cdot \nabla \overline u 
+d_i\int_\Omega x\Im \left( \partial_y u  \overline u \right)
+\lambda \int_\Omega |u|^2,\\
&\displaystyle  \geq \lambda_- \|\nabla u \|^2 - |d_i| \max_{(x,y)\in\Omega} |x|\|\nabla u \|\| u \| + \lambda\| u \|^2,
\end{array}
\end{equation}
and setting $\lambda=\frac{\left(|d_i|\max_{(x,y)\in\Omega} |x|\right)^2 }{2\lambda_-} > \frac{\left(|d_i|\max_{(x,y)\in\Omega} |x|\right)^2 }{4\lambda_-}$ yields
\begin{itemize}
%\item considering  $\lambda_-\left(\frac{\| \nabla u \|}{\| u \|}\right)^2-|d_i|\max_{(x,y)\in\Omega} |x|\frac{\| \nabla u \|}{\| u \|} + \lambda$, 
\item considering  $\lambda_-\left(\frac{\|\nabla u\|}{\|u\|}\right)^2-|d_i|\max_{(x,y)\in\Omega} |x|\frac{\|\nabla u\|}{\|u\|} + \lambda$,
 
then $\Re a(u,u)\geq \frac{\lambda}{2}\| u \|^2$,
%\item considering  $\lambda\left(\frac{\| u \|}{\| \nabla u \|}\right)^2-|d_i|\max_{(x,y)\in\Omega} |x|\frac{\|  u \|}{\| \nabla u \|} + \lambda_-$,
\item considering  $\lambda \left(\frac{\|u\|}{\|\nabla u\|}\right)^2-|d_i|\max_{(x,y)\in\Omega} |x|\frac{\|u\|}{\|\nabla u\|} + \lambda_-$,

then $\Re a(u,u)\geq \frac{\lambda_-}{2}\| \nabla u \|^2$.
\end{itemize} 
As a result $\Re a(u,u)\geq \frac{\min(\lambda_-,\lambda)}{4}\| u \|_{H^1(\Omega)}^2$, and so there exists a unique solution to \eqref{eq:interWF} thanks to Lax-Milgram theorem. Define $T:(f,g)\in L^2(\Omega)\times L^2(\partial \Omega) \mapsto u\in L^2(\Omega) $, $u$ being the unique solution to \eqref{eq:interWF}. The operator $T$ is compact since the unique solution $u$ of \eqref{eq:interWF} is actually in $H^1(\Omega)$ and the embedding $H^1(\Omega) \hookrightarrow	 L^2(\Omega)$ is compact.

Then $u$ is solution to the initial problem \eqref{eq:WF} if and only if $u$ satisfies $u = T((\lambda+c(x,y))u,g)$, which is equivalent to $u$ satisfying $(I-T((\lambda+c(x,y))\cdot ,0))u = T(0,g)$. Note that the composition function $T(\cdot,0)$ composed with the multiplication by the smooth function $c(x,y)+\lambda$ is compact, as the composition of a smooth and a continuous functions. Since this is a coercive plus compact decomposition, the Fredholm alternative states the equivalence between 
existence and uniqueness of a solution. To prove the uniqueness start from a solution $u$ to the homogeneous equation \eqref{eq:WF}, i.e. with $g=0$. Then the imaginary part of \eqref{eq:WF} reads :
\begin{equation}
\Im\frac 1 \mu \int_\Omega |u|^2 + \sigma \int_{\partial\Omega} |u|^2 = 0.
\end{equation} 
Since $\sigma>0$, $d_i \cdot \Im \frac 1 \mu <0$ and $d_i<0$, it implies that $u=0$. So the solution is unique.
\end{proof}
Note that $d_i\neq 0$ is crucial to prove the coercivity of the bilinear form, while the sign of $d_i$ is crucial to prove the uniqueness.

%%%%%%%%%%%%%%%%%%%%%%%%%%%%%%%%%%%%%%%%%%%%%%
\section{A numerical method}\label{apps}%%%%%%%%%%%%%
%%%%%%%%%%%%%%%%%%%%%%%%%%%%%%%%%%%%%%%%%%%%%%
The main feature of System \eqref{sys:THEsys} is that two of the coefficients in the equation depend smoothly on the space variables. Moreover the mode conversion  point itself has been defined as the point satisfying both the X- and O-mode cut-off conditions, namely $x+y=0$ and $x=0$. 
Since each cut-off is defined as the level curve of a smooth function, it is then clear that the varying nature of these quantities is crucial to the mode conversion phenomenon.  
%That is to say that the mode conversion point, at first order with respect to $\sqrt \epsilon$, is the point where two smooth quantities both vanish, namely $\kappa_{||}$ and $\kappa_\perp-\kappa_\wedge-k_{||}^2$.
As a result it is important to use a numerical method adapted to variable coefficients.

In order to reduce the pollution effect, documented in   \cite{poll}, appearing in finite elements methods for wave propagation, plane wave methods use basis functions adapted to this particular application: these basis functions are solutions of the homogeneous equation, \cite{GHP09_2}. See \cite{trefftz} for the first description of these methods under the denomination  Trefftz-based methods, and   \cite{GHP09_2,plu2} for more recent developments. The leading idea is that basis functions embedding information about the problem of interest are more efficient than polynomial functions to approximate a wave.

The numerical method that we propose in this work relies on  basis functions designed to fit the variable coefficients, called Generalized Plane Waves (GPWs), coupled with a Discontinuous Galerkin (DG) method, called the Ultra-Weak Variational Formulation (UWVF).
The UWVF, proposed by B. Despr\'es in   \cite{d94}, has been recast as a DG method in   \cite{GHP09_2}. It is a Trefftz method and its main feature is that all integrals involved in the formulation are boundary integrals, and as such it is much cheaper to evaluate than volume integrals from other methods. The idea to couple it with GPWs was proposed in   \cite{LMIGBD}.  
 
%%%%%%%%%%%%%%%%%%%%%%%%%%%%%%%%%%%
\subsection{The Ultra-Weak Variational Formulation}
In order to introduce the UWVF, define the slightly more general problem
\begin{equation}\label{sys:gen}
\left\{\begin{array}{ll}
\nabla\cdot(A\nabla u) + 2\imath d_ix\partial_y u-\left(1+\frac{1}{\mu}+x(x+y)\right)u = 0,&(\Omega)\\
 \nu \cdot (A\nabla u) + \imath d_ix\nu_y u +\imath \sigma u = \mathcal Q\left(- \nu \cdot (A\nabla u) - \imath d_ix\nu_y u +\imath \sigma u  \right) + g,&(\partial \Omega)
\end{array} \right.
\end{equation}
where $d=d_r+\imath d_i$ is a real constant, $A = \begin{pmatrix} 1 & d_r \\ d_r & |d|^2\end{pmatrix}$ is therefore a constant matrix, the complex constant $\mu$ was defined by $\frac1\mu = -1-\sqrt{1+d}$,  $\sigma>0$ is a real constant, $\mathcal Q$ a real valued piecewise constant function on the boundary, and $\Omega$ is a Lipschitz domain. %, and $\partial_\nu$ is the normal derivative on the boundary of the domain $\Gamma = \partial \Omega$

The UWVF is a weak formulation that relies on the use of test functions satisfying the dual equation. 
Let $v$ be a smooth test function and $u$ be a smooth solution of \eqref{sys:gen} on an open set $\mathcal O$. Then the integration by parts leading to the classical weak formulation of \eqref{sys:gen} yields 
\begin{equation}\label{eq:ibpu}% Integration By Part on u
\begin{array}{l}
\displaystyle \int_{\mathcal O} A\nabla u 
\cdot \nabla \overline v 
-\imath d_i\int_{\mathcal O} x\left( \partial_y u  \overline v - u  \partial_y\overline v \right)
+\int_{\mathcal O} c(x,y) u  \overline v \\\displaystyle
\phantom{\int_{\mathcal O} A\nabla u 
\cdot \nabla \overline v 
-\imath d_i\int_{\mathcal O} x\left( \partial_y u  \overline v - u  \partial_y\overline v \right)
}
-\int_{\partial {\mathcal O}}
\left( \nu \cdot (A\nabla u) + \imath d_ix\nu_y u\right) \overline v
= 0 .
\end{array}
\end{equation}
 Now suppose a smooth test function $v$ satisfies the dual equation 
\begin{equation}\label{eq:L*v}
-\nabla\cdot (A\nabla v) - 2\imath d_ix\partial_y v+\overline{c(x,y)}v = 0.
\end{equation}
The equivalent integration by parts yields
\begin{equation}\label{eq:ibpv}% Integration By Part on v
\begin{array}{l}
\displaystyle \int_{\mathcal O} A\nabla   \overline v 
\cdot \nabla u
-\imath d_i\int_{\mathcal O} x\left( \partial_y u  \overline v - u  \partial_y\overline v \right)
+\int_{\mathcal O} c(x,y) u  \overline v\\\displaystyle
\phantom{\int_{\mathcal O} A\nabla u 
\cdot \nabla \overline v 
-\imath d_i\int_{\mathcal O} x\left( \partial_y u  \overline v - u  \partial_y\overline v \right)
}
-\int_{\partial {\mathcal O}}
 \overline{\left( \nu \cdot (A\nabla v) + \imath d_ix\nu_y v\right)} u
= 0 .
\end{array}
\end{equation}
Noticing that the volume integral terms in \eqref{eq:ibpu} and \eqref{eq:ibpv} are identical since $A$ is hermitian, the UWVF contains only boundary integrals, thanks to such integration by parts. In order to write explicitly the UWVF for our problem, the following definitions are required.

Consider a Lipschitz domain $\Omega\subset \mathbb R^2$, with a mesh $\displaystyle\overline \Omega = \bigcup_{k_1}^{N_h} \overline{\Omega_k }$.  The boundary of the domain is $\Gamma = \partial \Omega$.
Let $h_k$ be the diameter of $\Ok$ and $\rho_k$ be the maximum of the diameters of
the spheres inscribed in $\Ok$, and $\mathcal Q_k$ be the value of $\mathcal Q$ on $\Ok$. The domain is supposed to be meshed so that $\mathcal Q$ is constant on each $\Ok$. The mesh is such that
$\exists \sigma$ such that $h_k \leq \sigma \rho_k$. The refinement parameter or mesh size parameter $h$ is then defined by 
$h = \max h_k$. The terminology {\it{regular mesh}} used to describe such a mesh comes from   \cite{ciarlet}.
The interface between two mesh elements $\Ok$ and $\Oj$, oriented from $\Ok$ to $\Oj$, is denoted $\Sigma_{kj}$. The part of the edge of a mesh element $\Ok$ that is part of the boundary of the domain is denoted $\Gk = \Gamma\cap \bOk$.

\begin{definition}
On each element $\Ok$ of the mesh, the differential operator and its dual are defined on each element of the mesh by
\sysnn{rl}{L\varphi_{|\Ok} &= \frac{1}{\imath \sigma}\left(-\nabla\cdot (A\nabla \varphi) - 2\imath d_ix\partial_y \varphi+c(x,y)\varphi\right), \\
L^*\varphi_{|\Ok} &= \frac{-1}{\imath \sigma}\left(-\nabla\cdot (A\nabla \varphi) - 2\imath d_ix\partial_y \varphi+\overline{c(x,y)}\varphi\right),
}
and define
\sysnn{l}{
l_k(\varphi,\psi) = \frac{1}{\imath \sigma}\left(\int_\Ok A\nabla \varphi 
\cdot \nabla \overline \psi 
-\imath d_i\int_\Ok x\left( \partial_y \varphi  \overline \psi - \varphi \partial_y\overline \psi \right)
+\int_\Ok c(x,y) \varphi  \overline \psi\right) \\
b_k\varphi = \frac{-1}{\imath \sigma}\left( \nu \cdot (A\nabla \varphi) + \imath d_ix\nu_y \varphi\right) \\
b_k^* \varphi=\frac{1}{\imath \sigma}\left(  \nu \cdot (A\nabla \varphi) + \imath d_ix\nu_y \varphi\right) \\
c_k \varphi = \varphi}
where  $\sigma$ is given by the boundary condition.
\end{definition}

These definitions of the operators $b_k$ and $c_k$ rely on their behavior along interior edges of the domain:
\begin{equation*}\left\{
\begin{array}{l}
 (b_k v)_{\skj} =  -(b_j v)_{\sjk}\\
 (c_k v)_{\skj} =  \phantom -(c_j v)_{\sjk}
 \end{array}\right.
\end{equation*}

Denote by $(\cdot,\cdot)_k$ the $L^2$ scalar product on $\Ok$, $(\varphi,\psi)_k = \int_{\Ok} \varphi \overline \psi$. Suppose that $u$ is a solution of \eqref{sys:gen} and $v$ is a solution of \eqref{eq:L*v} ; it is now clear that on each element $\Ok$ of the mesh one has
\begin{equation}
\begin{array}{l}
\displaystyle (Lu,v)_k - l_k(u,v) = -\int_\bOk \frac{1}{\imath \sigma}\nu \cdot (A\nabla u) \overline v -\int_\bOk \frac{1}{\imath \sigma} \imath d_ix\nu_y u\overline v = (b_ku,v)_k, \\
\displaystyle (u,L^*v)_k - l_k(u,v) = -\int_\bOk \frac{1}{\imath \sigma}\nu \cdot (A\nabla\overline v) u  +\int_\bOk \frac{1}{\imath \sigma} \imath d_ix\nu_y u\overline v, \\
\displaystyle \phantom{(u,L^*v)_k - l_k(u,v)} = \int_\bOk u\overline{\frac{1}{\imath \sigma}\nu \cdot (A\nabla v)   + \frac{1}{\imath \sigma} \imath d_ix\nu_y v} = (c_k u , b_k^*v)_k,
\end{array}
\end{equation}
 which yields, summing over $k$ and using the boundary condition in \eqref{sys:gen},
\begin{equation}
\begin{array}{l}
\displaystyle \sum_k( -b_k u + c_k u , b_k^* v+c_kv  )_k
\displaystyle - \sum_k   \sum_{j\neq k}( -b_j u + c_j u , -b_k^* v+c_kv  )_k \\
- \sum_{k, \Gamma_k \neq \emptyset}\mathcal Q_k ( -b_k u + c_k u , -b_k^* v+c_kv  )_k 
%\\
\displaystyle = \sum_{k, \Gamma_k \neq \emptyset}( g , -b_k^* v+c_kv )_k.
\end{array}
\end{equation}

 The function space for the UWVF is
$%\begin{equation*}
\displaystyle\mathcal  V=\prod_{k \in \unNh} L^2(\bOk),
$%\end{equation*}
while the test function space is defined by
\begin{equation*}%\label{Hk}
 \mathcal H=\prod_{k=1}^{N_h} \mathcal H_k \textrm{ where }\mathcal H_k = \left\{ v_k \in H^1(\Ok),\left| \begin{array}{l}
                                        L^* v_k = 0, (\Ok),
					\\ \left( 
b_k^* v_k + c_k v_k \right)_{|\bOk} \in L^2(\bOk)
                                       \end{array}
 \right. \right\}.
\end{equation*}
As a result of these definitions, any element of $\mathcal V$ is actually defined on the edges of every element of the mesh.

\begin{theorem}\label{thuwvf}
 Let $u\in H^1 (\Omega)$ be a solution of problem \eqref{sys:gen} such that $\partial_{\nu_k} u \in L^2(\bOk)$ for any k.
Let $\sigma>0$ be a given real number and $\mathcal Q$ is such that $\mathcal Q|{\bOk}=\mathcal Q_k\in\mathbb R$ and $|\mathcal Q_k|<1$ for all $k$.
 Then $X\in \mathcal V$
 defined by $X_{|\bOk}=X_k$ with
 $X_k=((-b_k + c_k)u_{|\Omega_k})_{|\bOk}$ satisfies 
\begin{equation}\label{UWVF}
\begin{array}{ll}
\displaystyle
 \sum_k \left( \int_{\bOk} \frac{1}{\sigma} X_k
 \overline{(b_k^*+c_k) e_k}
- \sum_{j,j\neq k} \int_{\Sigma_{kj}} \frac{1}{\sigma} X_j
 \overline{(-b_k^*+c_k)e_k} \right) 
\\ 
\displaystyle
 -\sum_{k, \Gamma_k\neq \emptyset} \int_{\Gamma_k} \frac{\mathcal Q_k}{\sigma} X_k \overline{(-b_k^*+c_k)e_k}
%\\
%\displaystyle
= \sum_k \int_{\Gamma_k} \frac{1}{\sigma}g\overline{(-b_k^*+c_k)e_k}, 
\end{array}
\end{equation}
for any 
$
 e= (e_k)_{k\in \unNh}\in \mathcal H$.
 Conversely, if $X\in \mathcal V$ is solution of $\eqref{UWVF}$ then the function $u$ defined locally  as the weak solution of:
\begin{equation} \label{eqAA}
\left\{
\begin{array}{l}
 u_{|\Omega_k}= u_k \in H^1(\Omega_k), \\
 L u_k = f_{|\Omega_k}, \\
(-b_k+c_k) u_k = X_k,
\end{array}
\right.
\end{equation}
is the unique solution of the problem \eqref{sys:gen}.
\end{theorem}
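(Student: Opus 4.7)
The plan is to reduce both directions to a single local Green-type identity derived from \eqref{eq:ibpu}--\eqref{eq:ibpv}. On an element $\Ok$, whenever $u$ satisfies $Lu = 0$ and $v \in \mathcal H_k$ satisfies $L^* v = 0$, the two volume forms $l_k(u,v)$ agree (because $A$ is Hermitian), and subtracting the two integration-by-parts identities yields
\enn{
\int_{\bOk} (b_k u)\,\overline{c_k v} \; = \; \int_{\bOk} (c_k u)\,\overline{b_k^* v}.
}
Combined with the jump identities $(b_k v)_{|\skj} = -(b_j v)_{|\sjk}$ and $(c_k v)_{|\skj} = (c_j v)_{|\sjk}$, this relation collapses the cross terms in the expansion of $\int_\skj X_j\,\overline{(-b_k^* + c_k) e_k}$ and shows that, on each interior interface, this quantity coincides with $\int_\skj X_k\,\overline{(b_k^* + c_k) e_k}$, where $X_k = (-b_k + c_k) u$. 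This is the computational workhorse for both directions.

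For the forward direction, I would take $u$ solving \eqref{sys:gen}, fix any $e \in \mathcal H$, apply the Green identity on each element, and sum over $k$. Pairing the two boundary contributions on every $\skj$ via the relation above produces exactly the first two sums in \eqref{UWVF}. On each $\Gk$, the impedance condition in \eqref{sys:gen} rewrites $X_k$ as $\mathcal Q_k(b_k + c_k) u + (\imath\sigma)^{-1} g$, accounting for the $\mathcal Q_k$ term and the source on the right-hand side.

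For the converse, given $X \in \mathcal V$ satisfying \eqref{UWVF}, I would define $u_k \in H^1(\Ok)$ locally as the solution of $Lu_k = 0$ with impedance trace $(-b_k + c_k) u_k = X_k$; the well-posedness of this elementwise problem follows from the coercive-plus-compact argument of Section \ref{sec:TS} applied on $\Ok$. Running the forward computation backwards reduces \eqref{UWVF} to an identity of the form
\enn{
\sum_{k,j}\int_{\skj} \bigl(\text{trace jump and conormal-trace jump of } u\bigr)\,\overline{(-b_k^* + c_k)e_k} \; + \; \sum_{k}\int_{\Gk} \bigl(\text{boundary residual}\bigr)\,\overline{(-b_k^* + c_k)e_k} \; = \; 0,
}
valid for every $e \in \mathcal H$. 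Forcing the bracketed quantities to vanish then gives $u \in H^1(\Omega)$, the equation, and the impedance boundary condition; uniqueness of this $u$ is inherited from the global well-posedness theorem.

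The main obstacle will be this last vanishing step: one must show that the traces $\{(-b_k^* + c_k) e_k : e_k \in \mathcal H_k\}$ form a dense subspace of $L^2(\bOk)$, so that suitable $e \in \mathcal H$ can isolate each interface $\skj$ and each boundary edge $\Gk$. This density reduces to solvability of the adjoint local impedance problem on $\Ok$ for prescribed $L^2$ data, which is again furnished by the analysis of Section \ref{sec:TS} applied elementwise. Everything else amounts to bookkeeping of the boundary integrals.
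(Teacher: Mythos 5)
The paper does not actually prove Theorem \ref{thuwvf}: it declares the result classical and defers to the UWVF references (Cessenat--Despr\'es and successors). Your plan reconstructs precisely the argument of those references --- the local Green identity obtained by subtracting \eqref{eq:ibpu} and \eqref{eq:ibpv}, trace/flux matching across interfaces for the forward direction, and a local impedance lifting plus surjectivity of the adjoint trace map for the converse --- so it is the intended proof, and you have correctly isolated the crux: that $e_k\mapsto(-b_k^*+c_k)e_k$ must map $\mathcal H_k$ onto a dense subspace of $L^2(\bOk)$, which reduces to well-posedness of a local adjoint impedance problem on each element.

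Two points need care. First, your claim that $\int_{\skj}X_j\,\overline{(-b_k^*+c_k)e_k}$ coincides with $\int_{\skj}X_k\,\overline{(b_k^*+c_k)e_k}$ \emph{on each interior interface} is too strong: using $X_j|_{\skj}=(b_k+c_k)u_k$ and $X_k=(-b_k+c_k)u_k$, the difference on a single edge is $2\int_{\skj}\bigl((b_ku)\overline{c_ke_k}-(c_ku)\overline{b_k^*e_k}\bigr)$, and the Green identity only annihilates this after summing over \emph{all} of $\bOk$, including $\Gk$, where the impedance condition $X_k=\mathcal Q_k(b_k+c_k)u+(\imath\sigma)^{-1}g$ must be folded in at the same time; the cancellation is elementwise, not edgewise. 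The bookkeeping still closes, but as written the intermediate claim is false. Second, the elementwise well-posedness you invoke is not a literal restatement of the theorem of Section \ref{sec:TS}: the converse needs it both for $L$ with Robin datum $\imath\sigma X_k$ and for the adjoint $L^*$ (equivalently, after conjugation, for the original operator with $d_i$ replaced by $-d_i$ and $c$ by $\overline c$). One must check that the coercivity estimate (which uses only $|d_i|$ and $\lambda_-$) and the uniqueness argument (the first-order term is symmetrized and so drops out of the imaginary part, leaving only the signs of $\sigma$ and $\Im(1/\mu)$) survive these sign flips. They do, but in this non-Helmholtz setting that verification is genuinely part of the proof and should be written out rather than cited.
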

This result is classical in the context of UWVF. We refer
to   \cite{cd98,monk:buffa,hip1,monk2,monk3}. Equation \eqref{UWVF} is the UWVF for \eqref{sys:gen}. Note that if $u$ satisfies $\nabla \cdot(A\nabla u)\in L^2$ on a Lipschitz domain, then the normal derivative of $u$ on the boundary of that domain is well defined. If moreover $u$ satisfies the boundary condition of \eqref{sys:gen}, then the normal derivative of $u$ can be written $\nu\cdot\nabla u= \frac{\mathcal Q-1}{\mathcal Q+1}(\imath \sigma u - \imath d_ix\nu_y u) +\frac{g}{\mathcal Q+1}$, so that it belongs to $L^2$.

%%%%%%%%%%%%%%%%%%%%%%%%%%%
\subsection{Discretization}
In order to discretize exactly the UWVF, one would need to use basis functions satisfying the dual equation $L^*\varphi = 0$ but such functions are not available. Following the procedure developed in   \cite{LMIG}, we can design GPWs adapted to \eqref{sys:gen}. These basis functions satisfy not exactly the dual equation, but will be designed to satisfy locally the approximation $L^*\varphi \approx 0$, $\varphi$ being the exponential of a polynomial. More precisely, using Taylor expansions, this subsection focuses on the design of basis functions satisfying $L^*\varphi =O(h^q)$ for any given order of approximation $q$. As in   \cite{LMIG}, the GPWs are defined by the composition of the exponential function with a polynomial. As a result, the non-linear system appearing in the design process can be solved for the Helmholtz equation. The corresponding algorithm is described here as a more general tool.

\subsubsection{Design of a basis function}
Since the design process is local, focus on a given mesh element $\Ok$ and its centroid $G\in \mathbb R^2$. 
The question to be answered is how to compute the coefficients $(\luij)$ such that the  function $\varphi= e^{P(x-x_G,y-y_G)}$ where $\displaystyle P = \sum_{0\leq i+j\leq \deg P}\luij X^iY^j$ would satisfy  $L^*\varphi =O(h^q)$. A natural answer comes from carefully canceling successively the Taylor expansion coefficients of order lower than $q$ in $L^*\varphi$. % The image by the dual operator of  $\varphi= e^{P(x-x_G,y-y_G)}$ with $\displaystyle P = \sum_{i+j\leq \deg P}\luij X^iY^j$ reads
%\enn{
% \begin{array}{l}
%L^*\varphi = \Bigg(  -\dx^2 P - 2dr\dx\dy P - |d|^2\dy^2 P  -(\dx P)^2 - 2dr\dx P\dy P - |d|^2(\dy P)^2 
%  \\ \phantom{ -\dx^2 P - 2dr\dx\dy P - |d|^2\dy^2 P-(\dx P)^2 - 2dr}
%  -2\imath d_i(x-x_G)\dy P -2\imath d_i x_G\dy P +\overline c \Bigg) e^P .
% \end{array}
% }
To simplify the notation, define the polynomial 
\enn{\begin{array}{rc}
\mathcal P_{L^*} &= -\dx^2 P - 2dr\dx\dy P - |d|^2\dy^2 P  -(\dx P)^2 - 2dr\dx P\dy P - |d|^2(\dy P)^2 \\ 
&\phantom =   -2\imath d_i(X+x_G)\dy P%-2\imath d_i(x-x_G)\dy P -2\imath d_i x_G\dy P
\end{array}  }
   so that $L^*\varphi =\mathcal P_{L^*}(x-x_G,y-y_G)+\overline c$.
  
The system to be solved is then described as follows. The unknowns are the polynomial coefficients $(\luij)$: for a given deg of $P$, there are $N_u = \frac{(\deg P +1)(\deg P +2)}{2}$ unknowns. The equations are obtained by equating the Taylor expansion coefficients of $L^*\varphi$ of order lower than $q$ to zero: there are $N_e = \frac{q(q+1)}{2}$ equations. Because of the square and product terms, this system is not linear and thus requires further investigation.

Setting $\deg P = q+1$ provides an under-determined system and since then  $N_u-N_e = 2q+3$, $2q+3$ unknowns can be fixed conveniently to obtain an invertible system. Since it is not so obvious how to proceed for $\deg P = q$, and since the system is over-determined for $\deg P < q$, we set from now on $\deg P = q+1$. The system now reads: for $0\leq i+j\leq q-1$,
\begin{equation}\label{sys:lus}
\left\{
\begin{array}{rl}
      0&=-2\lu{2}{0} -2d_r\lu{1}{1} - 2|d|^2\lu{0}{2} -\luoz^2 -2d_r \luoz\luzo - |d|^2 \luzo^2  \\
&\phantom{=-} - 2\imath d_i x_G \luzo+\overline c (x_G,y_G) \\ 
0& = -2\lu{2}{j} - 2d_r (j+1)\lu{1}{j+1}  - 2\imath d_i x_G (j+1) \lu{0}{j+1} \qquad \qquad (j>0)\\
 \\ & \displaystyle \phantom{=-} 
 - |d|^2(j+2)(j+1)\lu{0}{j+2} 
- \sum_{l=0}^j \lu{1}{j-l}\lu{1}{l}
+\frac{1}{j!} \dy^j\overline c (x_G,y_G)
\\ &\displaystyle \phantom{=-}
-2d_r  \sum_{l=0}^j (l+1)\lu{1}{j-l}\lu{0}{l+1}-|d|^2 \sum_{k=0}^j (j-k+1)\lu{0}{j-k+1}\lu{0}{k+1},\\
    0&= -(i+2)(i+1)\lu{i+2}{j} - 2d_r (i+1)(j+1)\lu{i+1}{j+1}  \qquad \qquad (i>0)
\\ & \phantom{=-} - 2\imath d_i x_G (j+1)\lu{i}{j+1}- |d|^2(j+2)(j+1)\lu{i}{j+2} 
\\ & \phantom{=-} - 2\imath d_i(j+1)\lu{i-1}{j+1}+\frac{1}{i!}\frac{1}{j!} \dx^i\dy^j\overline c (x_G,y_G)
 \\ &\displaystyle \phantom{=-}-\sum_{k=0}^i \sum_{l=0}^j (i-k+1)(k+1)\lu{i-k+1}{j-l}\lu{k+1}{l}
\\ &\displaystyle \phantom{=-}-2d_r \sum_{k=0}^i \sum_{l=0}^j (i-k+1)(l+1)\lu{i-k+1}{j-l}\lu{k}{l+1}
\\ &\displaystyle \phantom{=-}-|d|^2\sum_{l=0}^i \sum_{k=0}^j (j-k+1)(k+1)\lu{i-l}{j-k+1}\lu{l}{k+1}.
\end{array}\right.
\end{equation}
The next step is to describe which $2q+3$ unknowns can be fixed to make this system conveniently invertible as announced. So for a given $(i,j)$, examine the corresponding equation in the system. The key point is to realize that the nonlinear terms only involve unknowns $\lu{m}{n}$ such that $m+n \leq i+j+1$, whereas the unknowns $\lu{m}{n}$ such that $m+n = i+j+2$ only appear in linear terms. As a result, if for each equation these unknowns involved in the non-linear terms are known, the resulting system to be solved would in fact be linear. That is to say the system can be written:  for $0\leq i+j\leq q-1$,
\begin{equation}\label{sys:modlu}
\left\{
\begin{array}{l}
 2\lu{2}{0}    +2d_r\lu{1}{1} + 2|d|^2\lu{0}{2} 
 \\ \phantom{=-}=  -\luoz^2 -2d_r \luoz\luzo - |d|^2 \luzo^2  - 2\imath d_i x_G \luzo
% \\&\phantom{=-}
 +\overline c (x_G,y_G) \\ 
2\lu{2}{j} + 2d_r (j+1)\lu{1}{j+1}   + |d|^2(j+2)(j+1)\lu{0}{j+2}% \qquad\qquad \qquad
 \\  \displaystyle \phantom{=-} 
 = 
 -2\imath d_i x_G (j+1) \lu{0}{j+1}
- \sum_{l=0}^j \lu{1}{j-l}\lu{1}{l}
-2d_r  \sum_{l=0}^j (l+1)\lu{1}{j-l}\lu{0}{l+1}
\\ \displaystyle \phantom{===-}-|d|^2 \sum_{k=0}^j (j-k+1)\lu{0}{j-k+1}\lu{0}{k+1}+\frac{1}{j!} \dy^j\overline c (x_G,y_G), \ (j>0)
\\
  %  otherwise  
    (i+2)(i+1)\lu{i+2}{j} + 2d_r (i+1)(j+1)\lu{i+1}{j+1}+ |d|^2(j+2)(j+1)\lu{i}{j+2}
\\  \phantom{=-}= - 2\imath d_i x_G (j+1)\lu{i}{j+1} -2\imath d_i(j+1)\lu{i-1}{j+1}
+\frac{1}{i!}\frac{1}{j!} \dx^i\dy^j\overline c (x_G,y_G)
 \\ \displaystyle \phantom{===-}-\sum_{k=0}^i \sum_{l=0}^j (i-k+1)(k+1)\lu{i-k+1}{j-l}\lu{k+1}{l}
\\ \displaystyle \phantom{===-}-2d_r \sum_{k=0}^i \sum_{l=0}^j (i-k+1)(l+1)\lu{i-k+1}{j-l}\lu{k}{l+1}
\\ \displaystyle \phantom{===-}-|d|^2\sum_{l=0}^i \sum_{k=0}^j (j-k+1)(k+1)\lu{i-l}{j-k+1}\lu{l}{k+1},
 \quad\quad (i>0).
\end{array}\right.
\end{equation}

So first of all, suppose that the right hand side of \eqref{sys:modlu} is known. Now for a given level $\mathcal L$, $0\leq \mathcal L\leq q-1$, there is a subsystem of \eqref{sys:modlu} made of the $\mathcal L+1$ equations corresponding to $(i,j) = (i,\mathcal L-i)$. The unknowns of this sub-system are the $\lu{i}{\mathcal L+2-i}$ with $0\leq i\leq\mathcal L+2 $. So this sub-system has
\begin{itemize}
\item $\mathcal L +1$ equations,
\item $\mathcal L + 3$ unknowns.
\end{itemize}
Moreover the subsystem is tridiagonal since in each equation indexed by $i$ the unknowns involved are $\lu{i+2}{\mathcal L -i}$, $\lu{i+1}{\mathcal L -i +1}$ and $\lu{i}{\mathcal L -i +2}$. As a consequence, one easy way to get an invertible subsystem is to fix the two first unknowns $\lu{0}{\mathcal L+2}$ and $\lu{1}{\mathcal L +1}$. The solution of the subsystem then simply reads : for all $i \text{ from } 0 \text{ to } \mathcal L$,	
\begin{equation}\label{eq:luRHS}
\begin{array}{l}
\lu{i+2}{\mathcal L -i} = \frac 1{(i+2)(i+1)} 
\left(
RHS(i) - 2d_r (i+1)(\mathcal L-i+1)\lu{i+1}{\mathcal L-i+1}
\right.
 \\
\phantom{\lu{i+2}{\mathcal L -i} = \frac 1{(i+2)(i+1)} 
(
RHS(i).}
\left.
-  |d|^2(\mathcal L-i+2)(\mathcal L-i+1)\lu{i}{\mathcal L-i+2} 
\right),
\end{array}
\end{equation}
$RHS(i)$ being the right hand side of the corresponding Equation $(i,\mathcal L-i)$ in System \eqref{sys:modlu}.

 Then suppose the right hand side of a subsystem for a given level $\mathcal L$ is known. Since this subsystem can be solved thanks to \eqref{eq:luRHS}, it is clear that the right hand side of the subsystem made of the $\mathcal L+2$ equations corresponding to $(i,j) = (i,\mathcal L+1-i)$ is known. The only remaining step is to initialize this induction process for the level $\mathcal L = 0$. At this level the subsystem is only one equation, namely
 \begin{equation}
  2\lu{2}{0}    +2d_r\lu{1}{1} + 2|d|^2\lu{0}{2} =  -\luoz^2 -2d_r \luoz\luzo - |d|^2 \luzo^2  - 2\imath d_i x_G \luzo  +\overline c (x_G,y_G) .
 \end{equation}
So fixing the unknowns $\luoz$ and $\luzo$ fixes the right hand side.

To summarize, for each level $\mathcal L$ two unknowns have to be fixed, so $2q$ unknowns, plus two unknowns for the level $\mathcal L = 0$. Notice moreover that the unknown $\lu{0}{0}$ does not appear in the system, so it can be fixed without any consequence on the previous reasoning. Altogether these are $2q+3$ unknowns fixed, corresponding to the difference $N_u-N_e$ between the number of unknowns and the number of equations of the initial system.

Finally, solving system \eqref{sys:lus} can be described by the following algorithm:
\begin{itemize}
\item fix $\lu{0}{0}$
\item for level  $\mathcal L = 0$
\begin{itemize}
\item fix  $\luoz$ and $\luzo$
\item  compute
\end{itemize}
$$  
\begin{array}{l}
\lu{2}{0}  =\frac12\left(-2d_r\lu{1}{1} - 2|d|^2\lu{0}{2}   -\luoz^2 -2d_r \luoz\luzo - |d|^2 \luzo^2 \right. 
\\ \phantom{\lu{2}{0}  =\frac12(-2d_r\lu{1}{1} - 2|d|^2\lu{0}{2}   -\luoz^2}
\left.- 2\imath d_i x_G \luzo  +\overline c (x_G,y_G)\right)
\end{array}$$
\item for all levels $\mathcal L$ from $1$ to $q-1$
\begin{itemize}
\item fix $\lu{0}{\mathcal L+2}$ and $\lu{1}{\mathcal L +1}$
\item compute 
\end{itemize}
\end{itemize}
\begin{equation*}
\begin{array}{l}
 \displaystyle \lu{2}{\mathcal L}
   =\frac{1}{2 }\Bigg( - |d|^2(\mathcal L+2)(\mathcal L+1)\lu{0}{\mathcal L+2}   -2\imath d_i x_G (\mathcal L+1) \lu{0}{\mathcal L+1}
\\  \displaystyle \phantom{  \lu{2}{\mathcal L} =\frac{1}{2 } 0 }
 -  2d_r (\mathcal L+1)\lu{1}{\mathcal L+1} 
 - \sum_{l=0}^{\mathcal L} \lu{1}{\mathcal L-l}\lu{1}{l}
-2d_r  \sum_{l=0}^\mathcal L (l+1)\lu{1}{\mathcal L-l}\lu{0}{l+1}
\\  \displaystyle \phantom{  \lu{2}{\mathcal L}=\frac{1}{2 }0}
   -|d|^2 \sum_{k=0}^\mathcal L(\mathcal L-k+1)\lu{0}{\mathcal L-k+1}\lu{0}{k+1}+\frac{1}{\mathcal L!} \dy^{\mathcal L}\overline c (x_G,y_G)\Bigg),
\end{array}
\end{equation*}
\begin{equation*}
\begin{array}{rr}
\lu{i+2}{\mathcal L-i}& \displaystyle= \frac{1}{(i+2)(i+1) }\Bigg( - |d|^2(\mathcal L-i+2)(\mathcal L-i+1)\lu{i}{\mathcal L-i+2}    \phantom{rrrrrrrrr}
 \\(i>0)& \displaystyle
 - 2d_r (i+1)(\mathcal L-i+1)\lu{i+1}{\mathcal L-i+1} 
 - 2\imath d_i x_G (\mathcal L-i+1)\lu{i}{\mathcal L-i+1} 
  \\& \displaystyle 
 -2\imath d_i(\mathcal L-i+1)\lu{i-1}{\mathcal L-i+1}
+\frac{1}{i!}\frac{1}{(\mathcal L-i)!} \dx^i\dy^{\mathcal L-i}\overline c (x_G,y_G)
 \\& \displaystyle -\sum_{k=0}^i \sum_{l=0}^{\mathcal L-i} (i-k+1)(k+1)\lu{i-k+1}{\mathcal L-i-l}\lu{k+1}{l}
\\& \displaystyle -2d_r \sum_{k=0}^i \sum_{l=0}^{\mathcal L-i} (i-k+1)(l+1)\lu{i-k+1}{\mathcal L-i-l}\lu{k}{l+1}
\\& \displaystyle -|d|^2\sum_{l=0}^i \sum_{k=0}^{\mathcal L-i} (\mathcal L-i-k+1)(k+1)\lu{i-l}{\mathcal L-i-k+1}\lu{l}{k+1}
%\\ &\displaystyle 
\Bigg).
\end{array}
\end{equation*}

\subsubsection{Normalization of a set of basis functions}
The choice of polynomial coefficients $\left\{ \lu{i}{j}, i\in \{0,1\}, 0\leq j \leq q+1-i \right\}$ is then the only remaining step to completely design a GPW. This provides a tool to define locally not only one but a set of basis functions.

First of all, in order to simplify the computational steps, it is natural to set as many coefficients as possible equal to zero. Keeping in mind the example of a classical plane wave, the idea proposed here is to fix $(\luoz,\luzo)$ - depending on an angle $\theta$ -  to ensure that $\lu{2}{0} =0$, while all the other coefficients of $\left\{ \lu{i}{j}, i\in \{0,1\}, 0\leq j \leq q+1-i \right\}$ are set equal to zero. As a consequence the resulting GPW $\varphi$ satisfies $\varphi = \exp (\luoz \tilde x +\luzo \tilde y + O\left((\tilde x ,\tilde y)^3\right)$. From the expression of $\lu{2}{0}$ provided in the algorithm, one can see that this is easily done by fixing $\luzo$ and setting
\begin{equation}\label{eq:luoz}
 \luoz = -d_r \luzo +\sqrt{-(d_i^2 \luzo^2+2\imath d_i x_G\luzo) + \overline c (x_G,y_G)}.
 \end{equation}
The design of a local set of basis functions is then achieved, following the plane wave example, by fixing the coefficient $\luzo=\frac{\sqrt{ \overline c (x_G,y_G)}}{d_i} \sin\theta$, for equi-spaced values of $\theta \in [0, 2\pi)$.

%%%%%%%%%%%%%%%%%%%%%%%%%%%%%%%%%%%%
\subsubsection{Summary }

On a given element of the mesh $\Omega_k$, define the  basis functions $\{\varphi_k^l\}_{1\leq l \leq p(k)}$ by:
\begin{itemize}
\item[$\bullet$] $\overrightarrow g_k=(x_G,y_G)$ is the center of gravity , 
\item[$\bullet$] $p(k)$ is the number of basis functions,
and 
$\forall l$ such that $1\leq l \leq p(k)$,
\begin{itemize}
\item  define $\theta_l = \frac{2\pi}{p(k)}(l-1)$,
\item define $\luzo^l = \frac{\sqrt{ \overline c (x_G,y_G)}}{d_i} \sin\theta_l$,
\item compute $\luoz^l$ from \eqref{eq:luoz},
\item set the other fixed coefficients 

$\left\{ \lu{i}{j}^l, i\in \{0,1\},  0\leq j \leq q+1-i , i+j\neq 1\right\}$ to zero,
\end{itemize}
\item[$\bullet$] $q(k)$ is the approximation parameter on $\Omega_k$,
 $\forall l$ such that $1\leq l \leq p(k)$ and $\forall (i,j)$ such that $0\leq i+j\leq q+1$
\begin{itemize}
\item compute the remaining polynomial coefficients according to the algorithm
\item form the corresponding GPW $\varphi_k^l =\exp \sum \luij^l \tilde x ^i \tilde y ^j$.
\end{itemize} 
\end{itemize}
To obtain functions defined on the whole domain $\Omega$, these basis functions $\{\varphi_k^l\}_{1\leq l \leq p(k)}$ are set to be zero on $\Omega\backslash \Omega_k$. This process defines a set of basis functions on $\Omega$, namely
\enn{
\mathcal E = \cup_k\mathcal E_{\overrightarrow g_k} (N,p(k),q(k)) \textrm{ where } \left.\mathcal E_{\overrightarrow g_k} (N,p(k),q(k))\right|_\Ok=\left\{ \varphi_k^l \right\}_{1\leq l \leq p(k)},
}
that will be used to discretize the UWVF \eqref{UWVF}.

Thanks to the analysis presented in this section, computing the polynomial coefficients of a GPW does not require to solve any system: it reduces to applying the induction formula. All the polynomial coefficients necessary to define the function space $\mathcal E$ can be precomputed and stored, so that the evaluation of any GPW simply requires to read the corresponding coefficients from the precomputed table. The implementation that is used to produce the results displayed in the next section provides the following timing for $p=7$ and $q=4$: for a mesh of $34036$ elements, the computation of the polynomial coefficients table takes 235 seconds while it takes 3 seconds for the classical plane waves equivalent computation ; for a mesh of $78574$ elements, the computation of the polynomial coefficients table takes 701 seconds while it takes 7 seconds for the classical plane waves equivalent computation. These two sizes of mesh are typically the smaller and bigger sizes used in the next section. These timings are negligible compared to the time required to build the matrix.

%%%%%%%%%%%%%%%%%%%%%%%
\section{Numerical simulation}%%%%%%%%%%%%%
\label{sec:NS}
%%%%%%%%%%%%%%%%%%%%%%%
The aim of this section is to show numerical evidence of waves propagating through the mode conversion point. In our model, the propagative zones are defined by $\{x<0 \text{ and } y>-x \}$ and $\{x>0 \text{ and } y<-x \}$. A series of test cases will be designed to observe some features of this model, with the concern of observing the incoming wave and avoiding non-physical reflections.

%%%%%%%%%%%%%%%%%%%%%%%%%%%
\subsection{Definition of the test cases}
%explain the coeffs (cut - varying, where and why)
Since the mode conversion point in the $(x,y)$ coordinate system is the origin, the computational domain is centered at the origin. Remember that the medium is propagative if $x(x+y)<0$ and evanescent if $x(x+y)>0$. The propagative and evanescent zones are separated by two cut-offs, $x=0$ and $x+y=0$.

As already mentioned, the variable coefficients are key to the mode conversion phenomenon. The two only variable coefficients are
\begin{itemize}
\item the zeroth order term, varying as $x(x+y)$,
\item the first order term, varying as $x$.
\end{itemize}
 To observe an incoming wave before and after crossing the mode conversion region, these variable coefficients are set to constant values away from the origin. That is to say that the zeroth and first order coefficients will be allowed to vary only in a vicinity of the mode conversion point, while they will be set to a constant value in a zone further away from this point. 
\begin{figure}
\begin{center}
\includegraphics[clip,trim=3cm 1cm 3cm 1cm,height=4.5cm]{./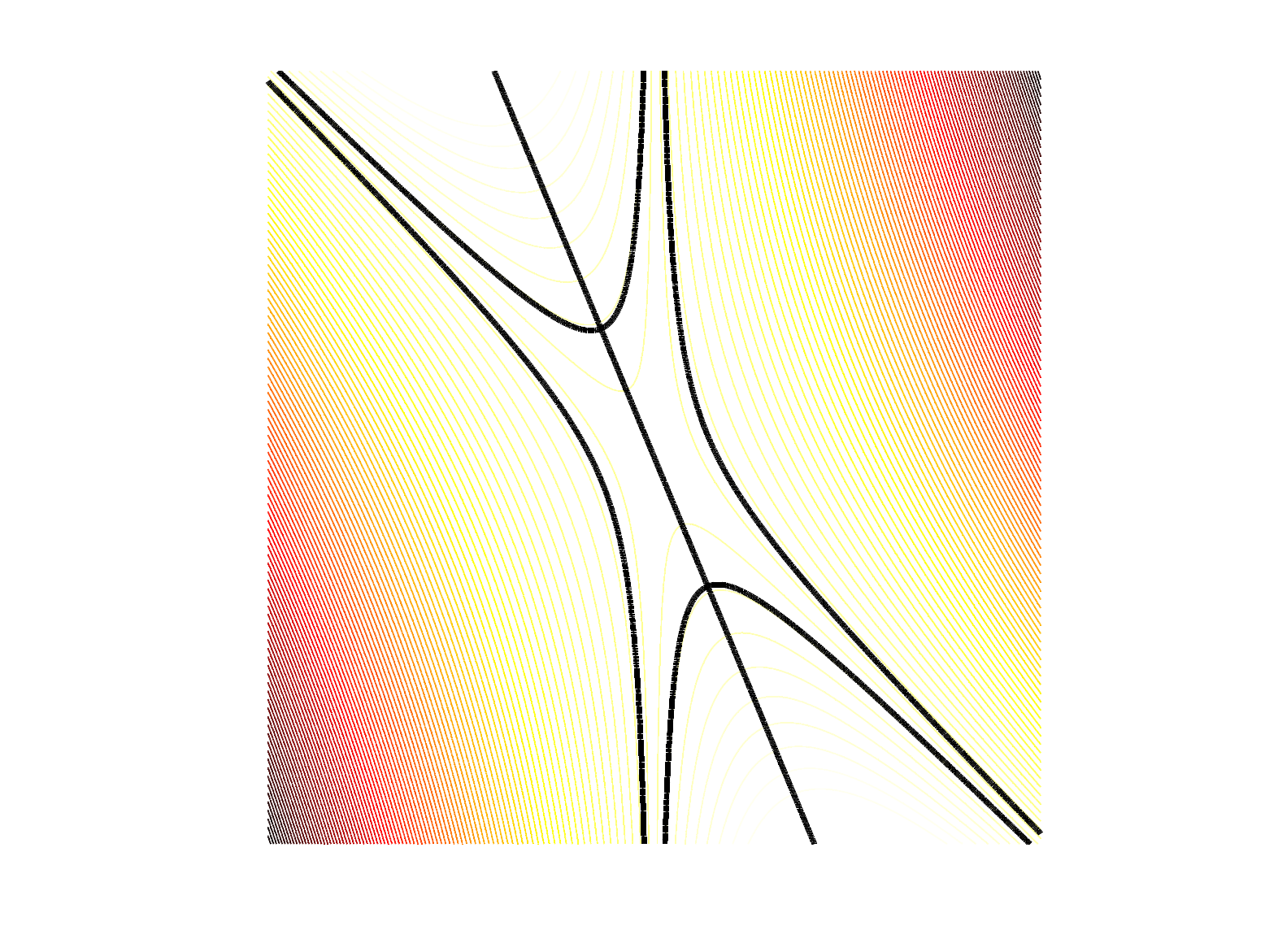}
\includegraphics[clip,trim=3cm 1cm 3cm 1cm,height=4.5cm]{./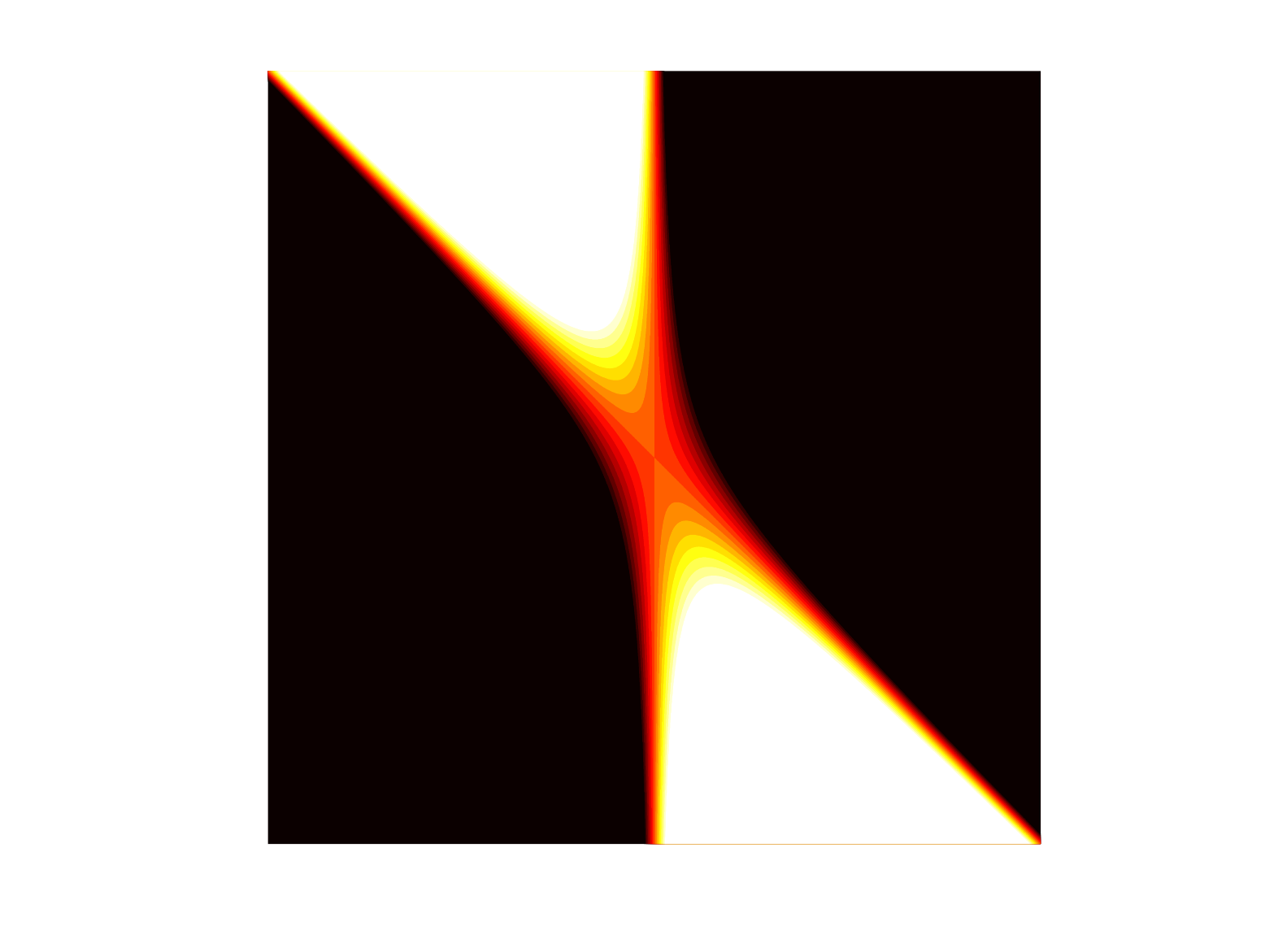}
\includegraphics[clip,trim=3cm 1cm 3cm 1cm,height=4.5cm]{./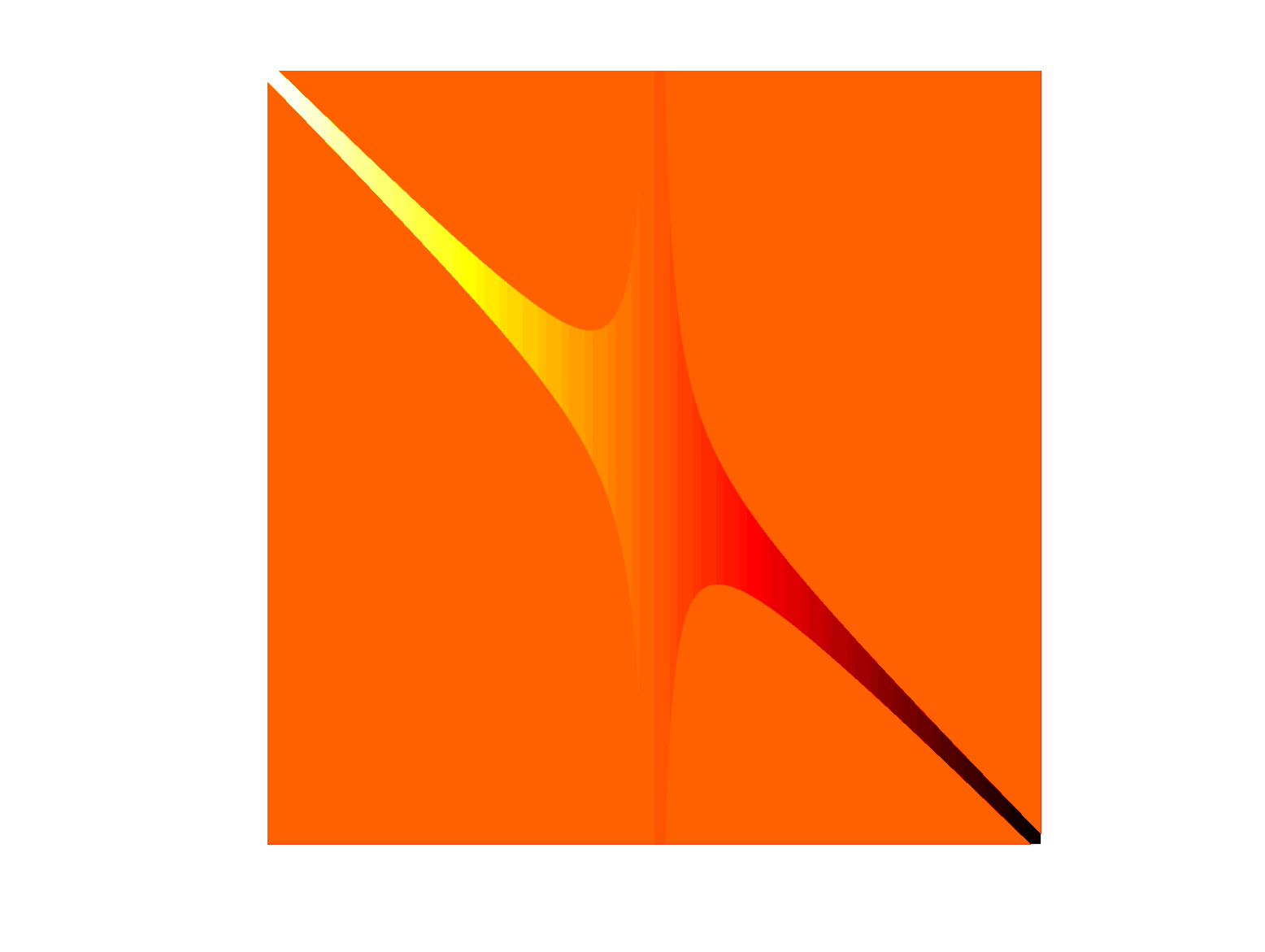}
\end{center}
\caption{Functions $f$, $\tilde f$ and $\tilde g$ defined in \eqref{eq:tftg}, for $y_K = 6$. Left: Level curves of $f$, highlighting the $C = f(x_K,y_K)$ and $\pm C$ level curves together with the line of steepest gradient in the propagative zone. Middle: In the black zone $\tilde f$ is constant and positive, it is included in the evanescent zone. In the white zone $\tilde f$ is constant and negative, it is included in the propagative zone. Right: $\tilde g$ is non-zero only in the region where $|x(x+y)|\leq -C$.}
\label{fig:VC}
\end{figure}  In order to introduce only one artificial interface between the constant coefficient zone and the varying coefficient zone, the constant zone is chosen to be the same for both coefficients. It is fixed at level curves of the function $f(x,y) = x(x+y)$, for a given transition point $(x_K,y_K)$, see Figure \ref{fig:VC}. %{fig:x(x+y)a}. 
 The size of the varying coefficient zone clearly varies with the distance between the transition point and the origin. 
%\begin{figure}
%\begin{center}
%\includegraphics[height=4cm]{./xxy.pdf}
%\end{center}
%\caption{Function $f(x,y) = x(x+y)$, highlighting the $C = f(x_K,y_K)$ and $\pm C$ level curves together with the line of steepest gradient in the propagative zone.}
%\label{fig:x(x+y)a}
%\end{figure} 
  In the propagative zone, that is to say $\{ (x,y) \ s.t. \ f(x,y)<0 \}$, the steepest gradient of $f$ is along the line $y=-(\sqrt 2 + 1)x$. For a given point $(x_K,y_K)$ along this line and $C = f(x_K,y_K)$, the constant zone has four connected components defined by $\{(x,y) \text{ s.t.}  f(x,y)<C \text{ and } x<0  \}$, $\{(x,y) \text{ s.t.}  f(x,y)<C \text{ and } x>0  \}$, $\{(x,y) \text{ s.t. } f(x,y)>-C \text{ and } x<0  \}$ and $\{(x,y) \text{ s.t. } f(x,y)>-C \text{ and } x>0  \}$.  In the model studied in this work, the two former components are in the propagative zone while the two latter ones are in the evanescent zone. The only remaining connected component of the plane is the varying zone. It can also be seen as a transition zone around the mode conversion point. In the constant zone the first order term's coefficient is set to zero, and zeroth order term $x(x+y)$ is replaced by either $\pm C$ in order to be continuous. That is to say that the coefficients used for numerical purposes are
 \begin{equation}
 \label{eq:tftg}%tilde f tilde g
\tilde f (x,y) = \left\{\begin{array}{l}
 -C   \text{ if } x(x+y)\geq -C, \\
 \phantom - C   \text{ if } x(x+y)\leq \phantom -C, \\
 x(x+y)  \text{ otherwise},
 \end{array}\right.
 %\text{ and }
 \tilde g (x,y) =
 \left\{\begin{array}{rl}
 x& \text{ if }|x(x+y)|\leq -C, \\
 0 & \text{otherwise.}
 \end{array}\right.
 \end{equation}
This function $\tilde f$ is continuous while $\tilde g$ is discontinuous along the level curves $f(x,y) = \pm C$. Figure \ref{fig:VC} % Varying Coefs
 shows the modified variable coefficients. As a consequence of this setting, a wave propagating from the constant propagative zone toward the mode conversion point is expected to propagate until it reaches the varying zone. The wave can then split into a reflected wave and a transmitted wave propagating at least partially through the varying zone. Mode conversion then corresponds to the existence of a transmitted wave exiting this transition zone on the other connected component of the constant propagative zone.
  
The antenna is made of a wave guide of width $l_0$ and length $4l_0$, plus a horn, as described in Figure \ref{fig:CD}. %{fig:Aant}. 
In order to send a wave toward the mode conversion point, this antenna is placed facing the origin, in the $\{x<0\}$ and $\{y>-x \}$ region.
A given transition point $(x_K,y_K)$ is fixed, defining the amplitude of the transition zone. Then the vertical position of the antenna is determined so that if it is along the steepest gradient axis, the distance between the end of the horn and the transition point is $12l_0$. 
An example of computational domain with the previously described features is also displayed in Figure \ref{fig:CD}.
\begin{figure}
\begin{center}
\includegraphics[clip,trim=0cm 0cm 0cm 0cm,height=4cm]{./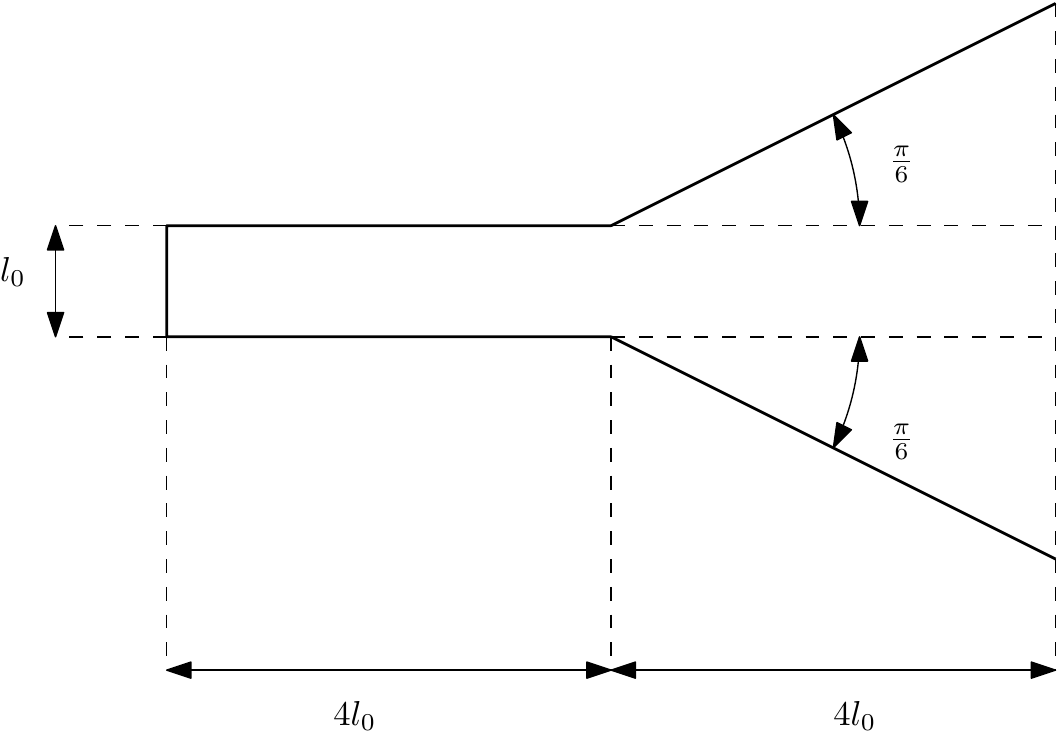}
\includegraphics[clip,trim=1.8cm .5cm 3.5cm 1cm,height=7cm]{./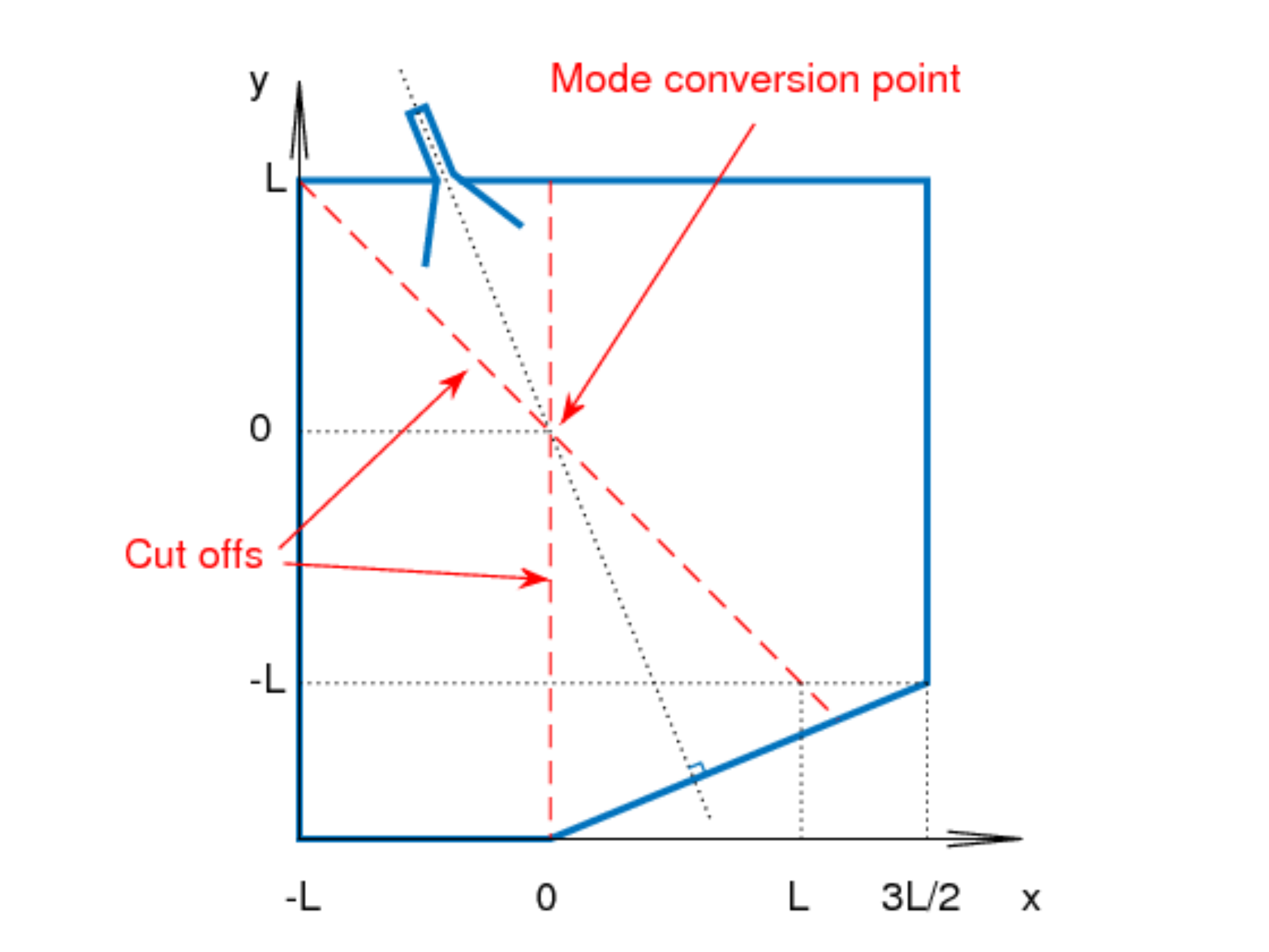}
\end{center}
\caption{Left: Geometry of the antenna that will be placed in the propagative zone facing the mode conversion point, with different incident angles. Right: Computational domain. The blue heavy line represents the boundary of the domain. The red dashed lines represent the two cut-off lines, and cross at the mode conversion point. The antenna is placed on the top boundary, in the propagative zone.}
\label{fig:CD}
\end{figure}  

Then the antenna can be placed along any axis with an angle $\theta$ with respect to the $y$ axis, $\theta\in[0,\pi/4]$, as shown in Figure \ref{fig:A}.
%\begin{figure}
%\begin{center}
%\includegraphics[clip,trim=0cm 0cm 0cm 0cm,height=5cm]{./antenna.pdf}
%\end{center}
%\caption{Geometry of the antenna that will be placed in the propagative zone facing the mode conversion point, with different incident angles. }
%\label{fig:Aant}
%\end{figure}  
\begin{figure}
\begin{center}
\includegraphics[clip,trim=2cm 1cm 1.3cm 1cm,height=3.3cm]{./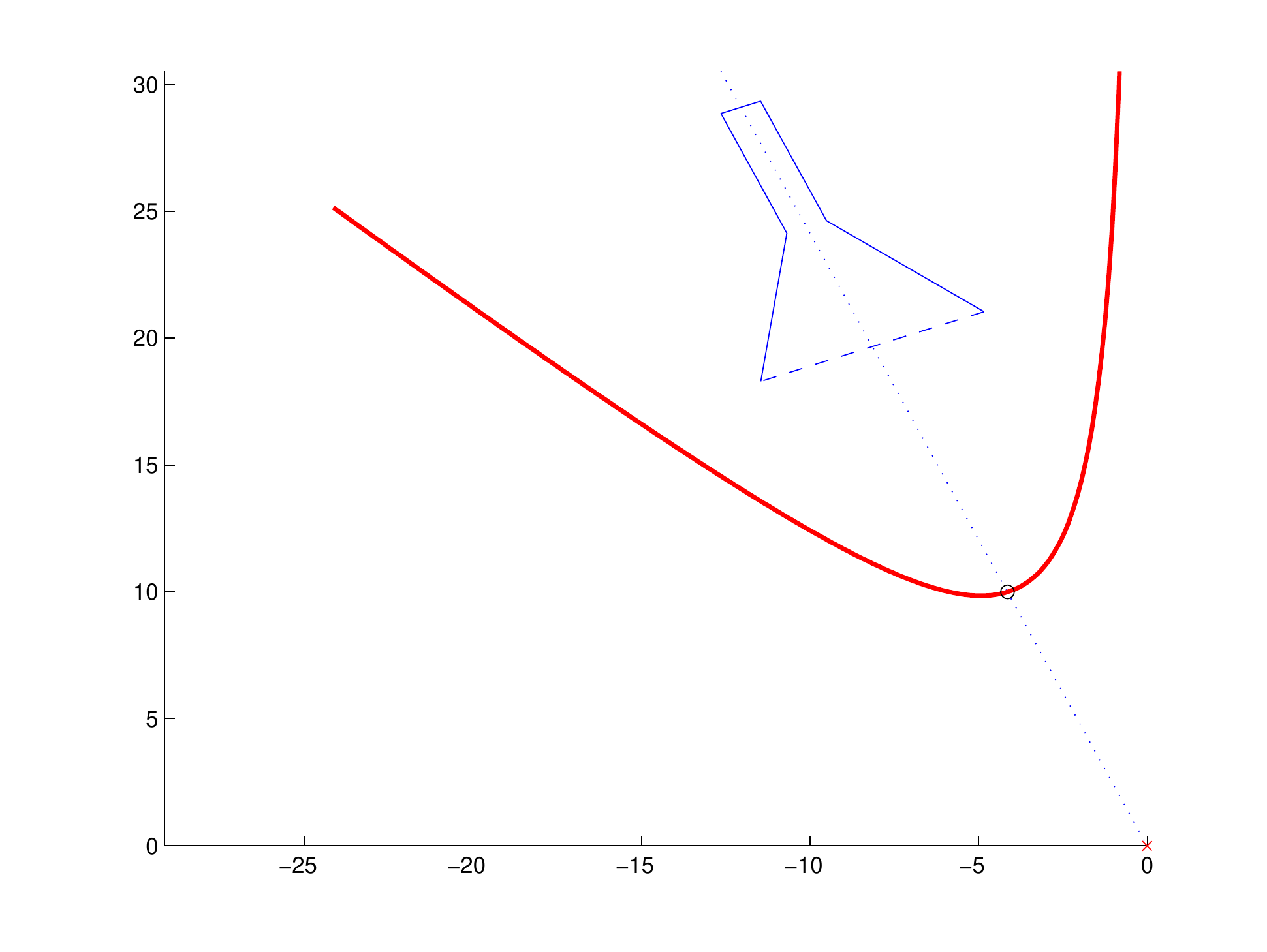}
\includegraphics[clip,trim=2cm 1cm 1.3cm 1cm,height=3.3cm]{./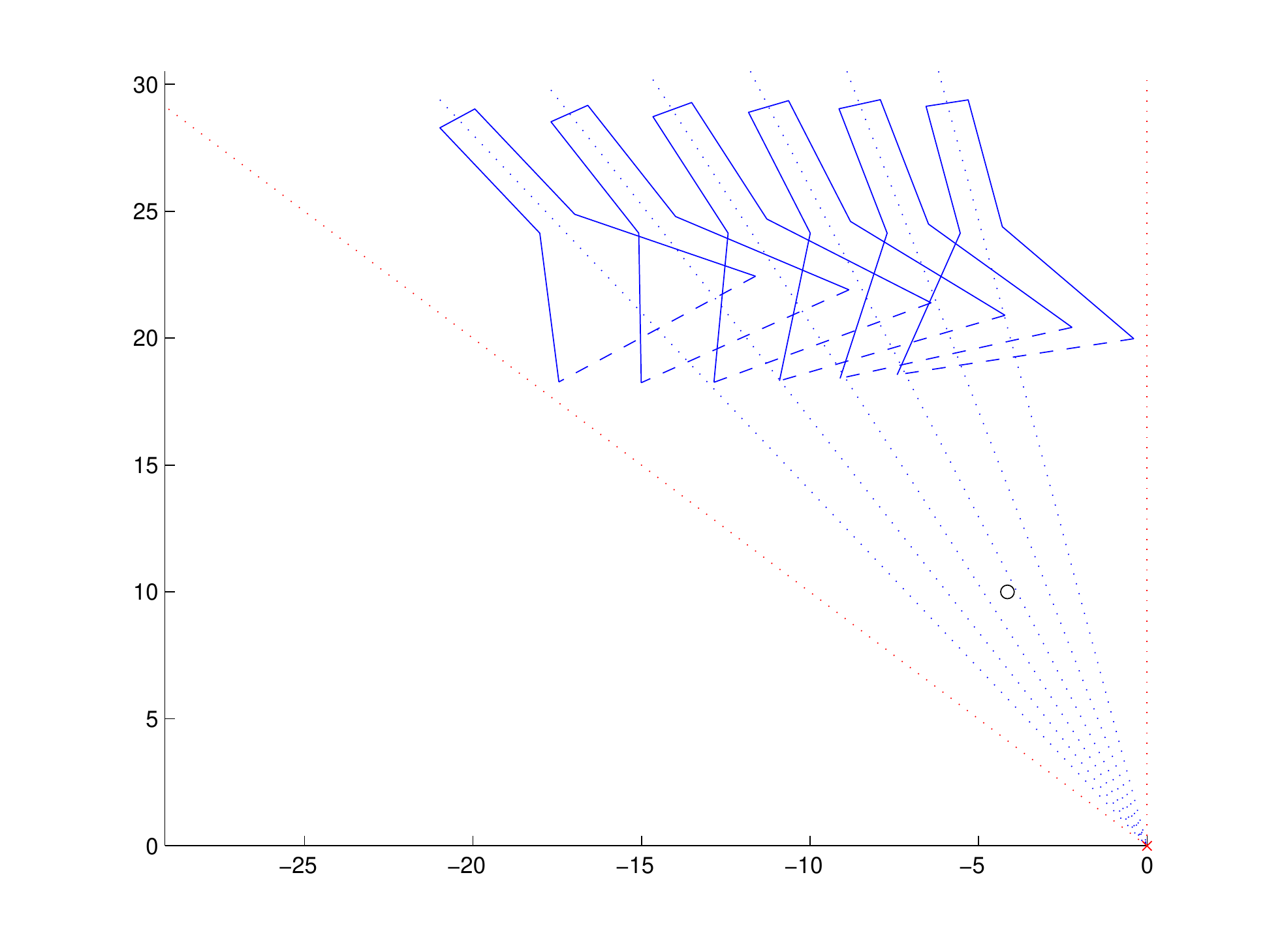}
\includegraphics[clip,trim=2cm 1cm 1.3cm 1cm,height=3.3cm]{./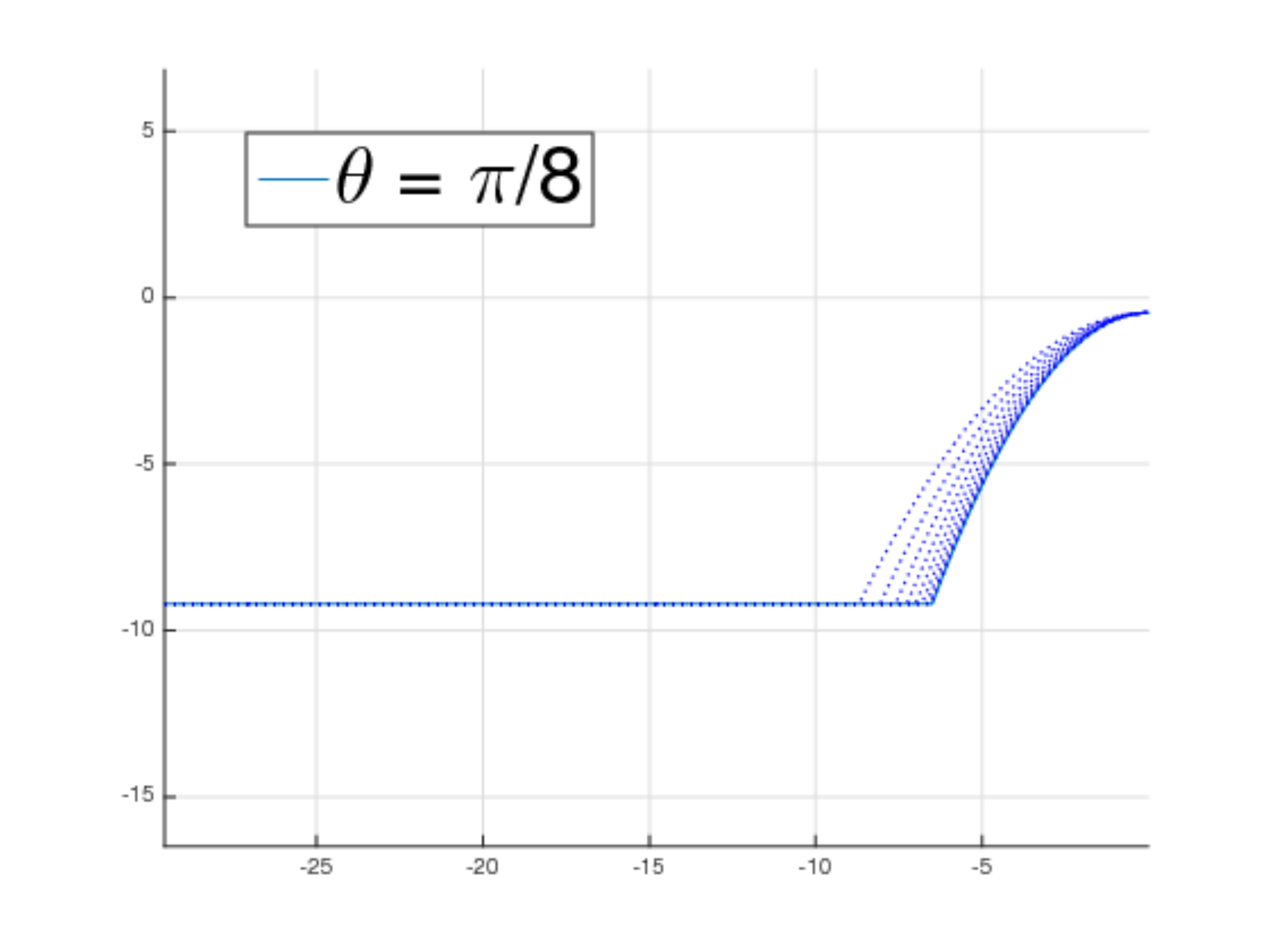}
\end{center}
\caption{%Left: Geometry of the antenna. Center and right: 
Left and middle: Antennas facing the mode conversion point, in the top left quarter of the computational domain. The mode conversion point is the bottom right corner of both graphs.
The point $(x_K,y_K)$ is the black circle while the mode conversion point is red cross. Left: Antenna along the steepest gradient axis. The transition line defined by $x(x+y) = x_K(x_K+y_K)$ and $x<0$ is the thick red line. Right: Different positions of the antenna at different angles in the propagative zone.
Right: Real part of the coefficient $c(x(t),y(t)) = 1+1/\mu+\sin\theta(\sin\theta-\cos\theta)t^2$ along the main axis of the antenna, for different incident angles, computed with $d=2-\imath$ and $y_K=6$.}
\label{fig:A}
\end{figure}  
Along the main axis of the antenna, between the antenna and the mode conversion region, the zeroth order coefficient $c(x(t),y(t)) = 1+1/\mu+\tilde f(x(t),y(t))$ is constant on the antenna side and then varying as a quadratic function until the mode conversion point. It is odd as a function of t, the parameter describing the position along the steepest gradient line.  Figure \ref{fig:A} also %{fig:coef}
 represents this coefficient $c$ along the main axis of the antenna, for different incident angles $\theta$. The antenna is described  by $x(t)= (\sin\theta )t$ and $y(t)= -(\cos\theta) t$, the parameter $t$ being negative between the antenna and the mode conversion point, and $0$ at the the mode conversion point. 
% \begin{figure}
%\begin{center}
%\includegraphics[height=4cm]{./coef.pdf}
%%\includegraphics[height=3cm]{./coefZoomMC.pdf}
%\includegraphics[height=4cm]{./coefZoomT.pdf}
%\end{center}
%\caption{Real part of the zeroth order coefficient $c(x(t),y(t)) = 1+1/\mu+\sin\theta(\sin\theta-\cos\theta)t^2$ along the main axis of the antenna, for different incident angles, as a function of $t$. The solid line represents the case $\theta = \pi/8$ while the dashed lines represent different values of $\theta\in(\pi/8,\pi/4)$. Computed with $d=2-\imath$ and $y_K=6$. Left: Profile between the antenna and the mode conversion region. 
%%Center: Zoom on the mode conversion region, $t=0$. 
%Right: Zoom on the transition point between constant and non constant coefficient zones.}
%\label{fig:coef}
%\end{figure}  

The rest of the computational domain is a rectangle aligned with the $x$ and $y$ axis, except for the bottom right corner which is cut perpendicularly to the steepest gradient line $y=-(\sqrt 2+1)x$. This cut is meant to avoid artificial reflections of waves outgoing from the transition zone toward the $\{x>0 \text{ and } y<-x\}$ region. 

%The boundary conditions are then defined
The parameter $\mathcal Q$ in the boundary condition \eqref{sys:gen} determines the type of boundary condition. On the bottom wall of the antenna, an incoming plane wave, with wave length $l_0$, is sent into the domain by a Robin type boundary condition, setting $\mathcal Q=0$. On the other walls of the antenna, a perfect conductor type boundary condition is set by $\mathcal Q=-1$ and $g=0$. Absorbing boundary conditions are set on on all the other boundaries of the domain, setting $\mathcal Q=0$ and $g=0$.

%%%%%%%%%%%%%%%%%%%%%%%%%%%
\subsection{Results and comments}
%A VER  5.2 check the coef along main axis for di=1 but dr in $\mathbb R$

A series of results are proposed to illustrate the mode conversion phenomenon in the 2D model introduced in this work. Remember that by definition of the O- or X-mode cut-offs, a pure wave of the corresponding type can only propagate on one side of the cut-off. As a measure of mode conversion, define the transmission coefficient  computed here as the relative difference between the maximum wave amplitude before the transition zone, i.e. on $\{x< 0\} \cap \{x(x+y)< C\} $, and after the transition zone,  i.e. on $\{x> 0\} \cap \{x(x+y)< C\} $.
All the following results were computed with $p=7$ basis functions per element, $q=4$ as the approximation order and with either $d=-2-\imath$ or $d = -\tan(\pi/8)-\imath$. The absolute value of the solution will always be represented with $200$ level curves. The transition lines between the constant and varying coefficient zones are indicated in blue. Note that the computational domain is meshed automatically by only imposing the mesh size, and so the mesh does not resolve the boundaries of the propagative and absorbing region, or the zones of constant and variable coefficients.

To present the type of results obtained, Figure \ref{fig:thetamain} displays two solutions computed with the incident angle $\theta = \pi/8$ and $d = -2-\imath$, for both $y_K=7$ and $y_K = 12$. For the $y_K=7$ case, the antenna parameter is $l_0 \approx 1.82$, the mesh is generated for a mesh size of $0.64$, and is made of $36548$ triangles. For the $y_K=12$ case, the antenna parameter is $l_0 \approx 1.06$, the mesh is generated for a mesh size of $0.39$, and is made of $63456$ triangles. One can observe the same type of behavior: the outgoing wave propagates from the antenna, driven by the two cut-offs toward the mode conversion point. Reflection can be observed clearly in the transition zone where the coefficients are varying. Comparing these two solutions naturally suggests that the size of the transition zone plays an important role in the transmission process.
\begin{figure}
\begin{center}
\includegraphics[clip,trim=2cm 1cm 1.5cm 1cm,height=4cm]{./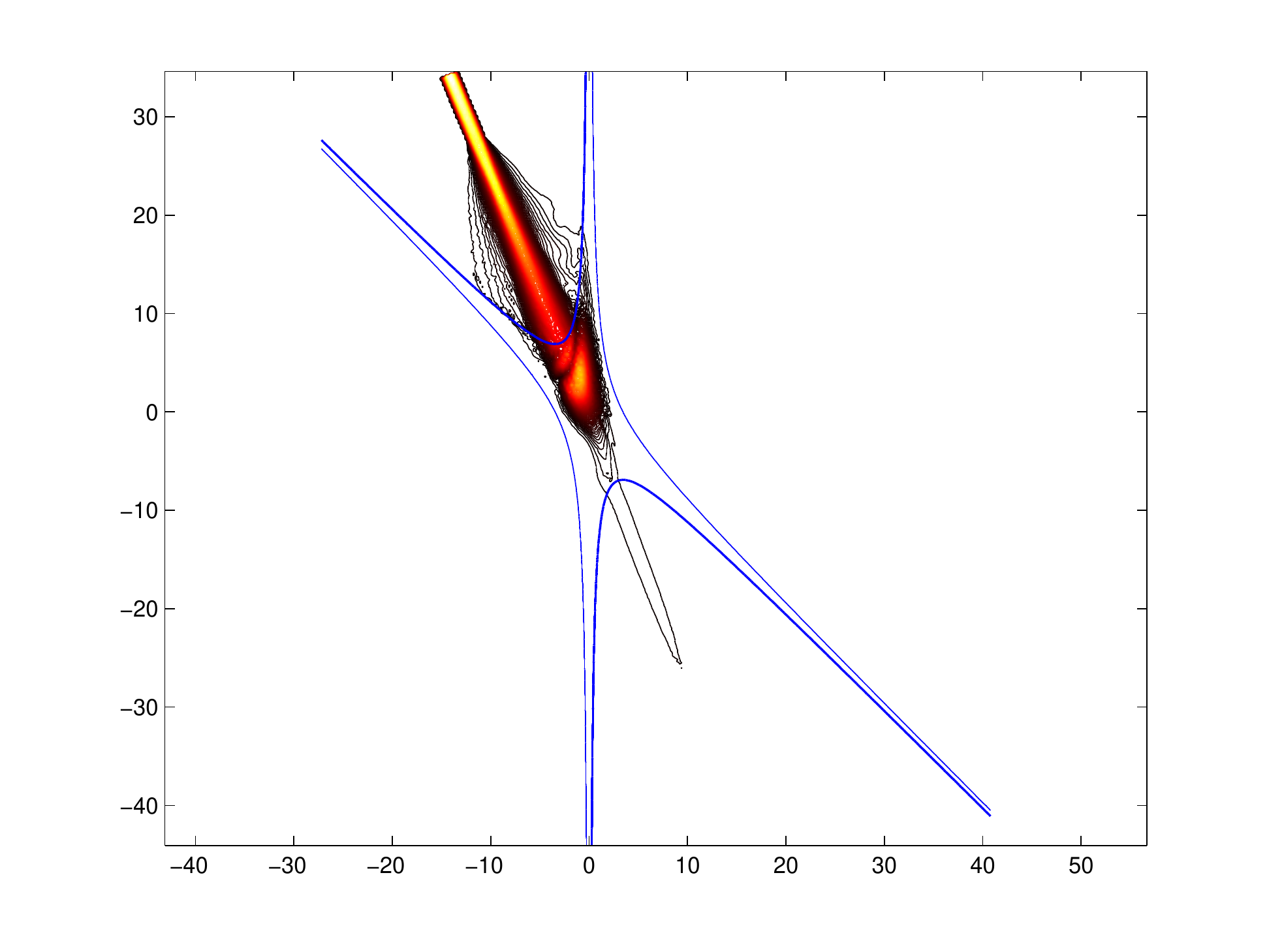}
\includegraphics[clip,trim=2cm 1cm 1.5cm 1cm,height=4cm]{./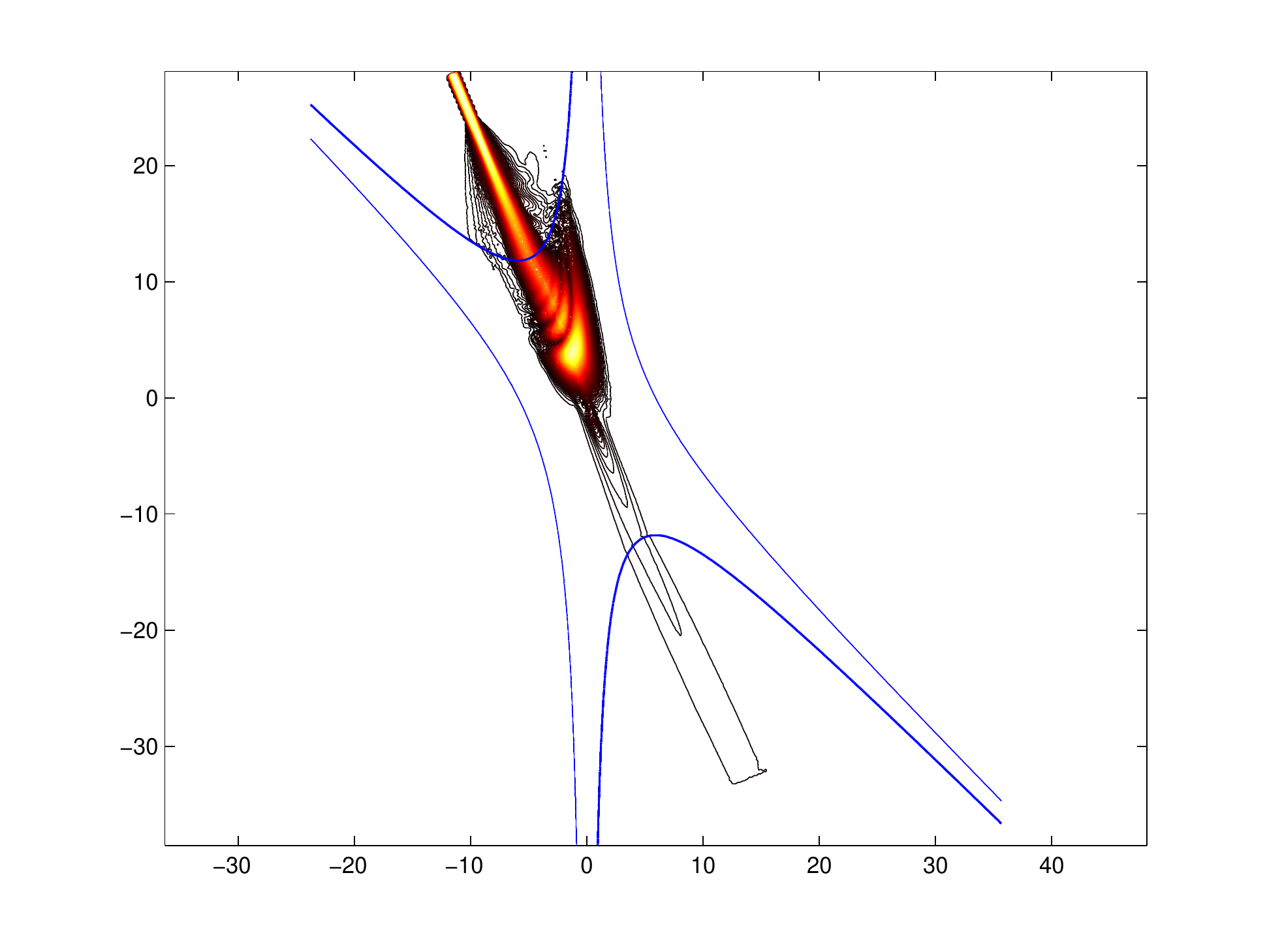}
\end{center}
\caption{Waves propagating from the antenna placed at an angle $\theta = \pi/8$ with the vertical axis, toward the mode conversion point $(0,0)$. The parameter $d$ is $d=-2-\imath$.   Left: Computed for $y_K=7$. Right: Computed for $y_K=12$. }
\label{fig:thetamain}
\end{figure}

The transmission process is expected to depend strongly on the incident angle $\theta$. Figure \ref{fig:Tthetas} displays this transmission coefficient as a function of the incident angle $\theta$, for both $d=-2-\imath$ and $d = -tan(\pi/8)-\imath$. Typical values of various geometry and mesh parameters used to produce Figure \ref{fig:Tthetas} are given in Table \ref{tab:meshparam}. As expected, the transmission is very low when $\theta$ is close to zero and $\pi/4$, which correspond to launching the wave parallel one of the cut-offs, while a peak is observed between those two regimes, which corresponds to the wave propagation direction closer to the steepest gradient line of the variable coefficients. Figure \ref{fig:maxmin} displays the solution with the maximum transmission as well as a solution completely reflected by the cut-off before reaching the mode conversion region. 
\begin{figure}
\begin{center}
\includegraphics[clip,trim=2cm 1cm 1.5cm 1cm,height=4cm]{./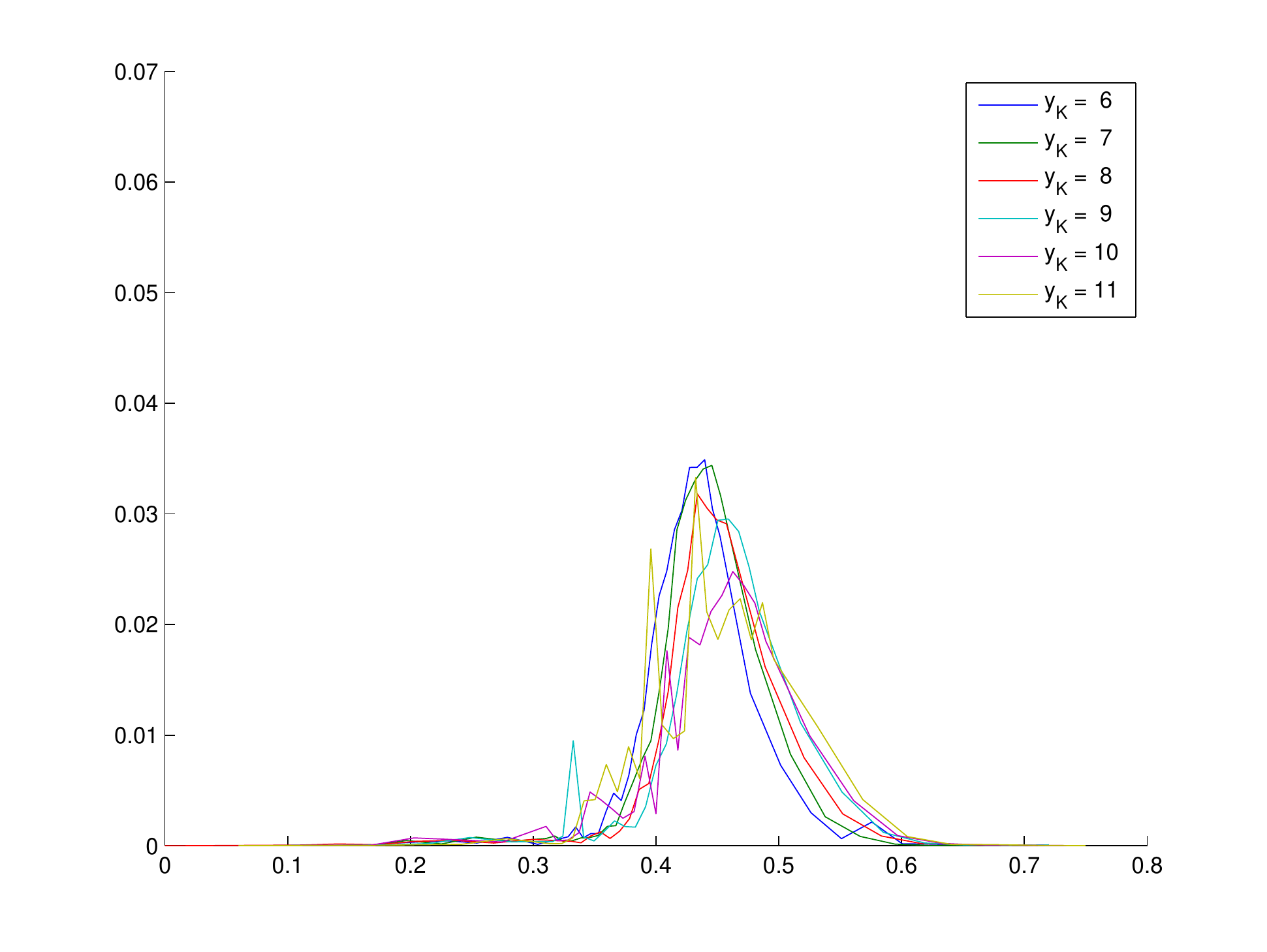}
\includegraphics[clip,trim=2cm 1cm 1.5cm 1cm,height=4cm]{./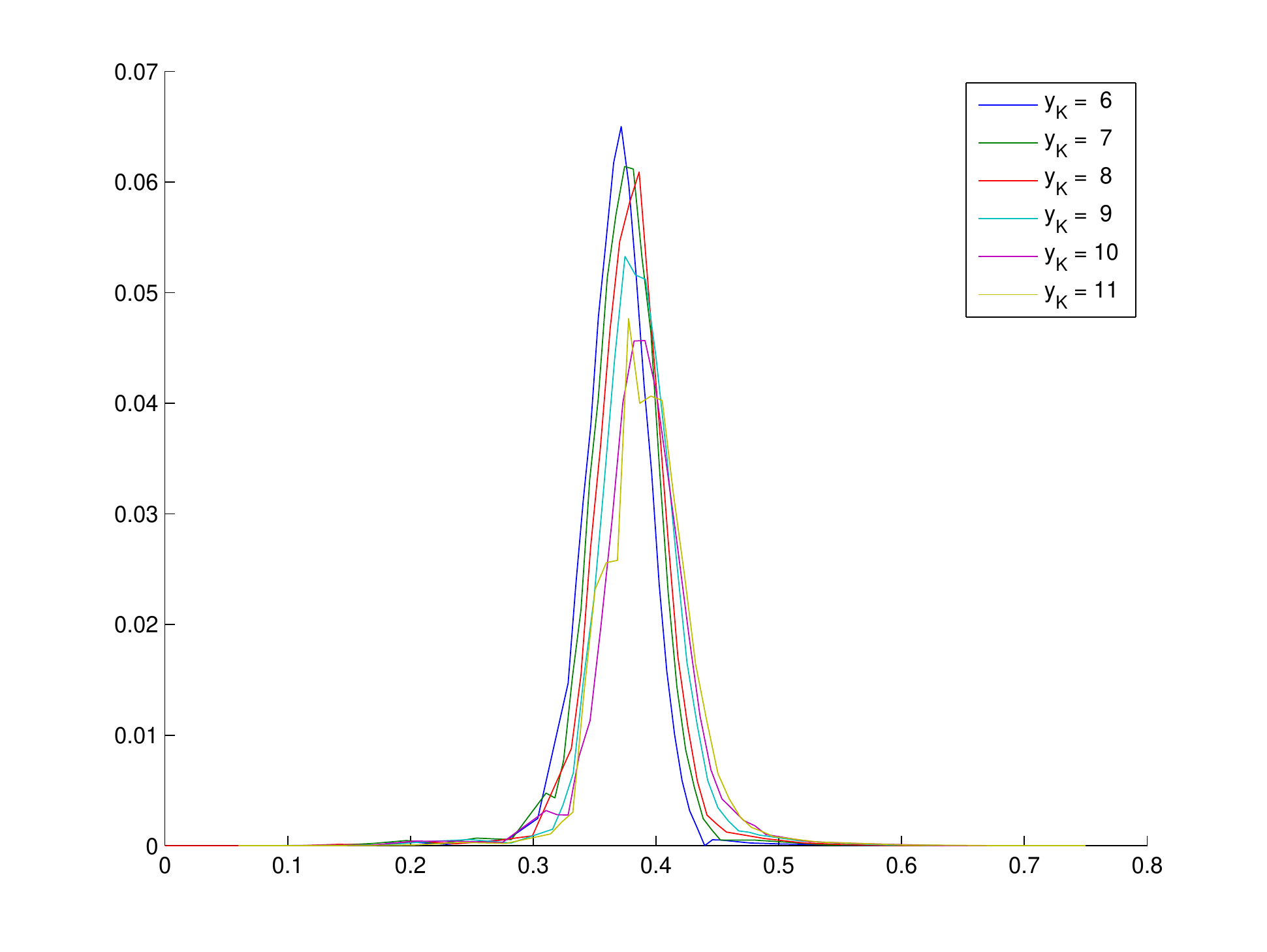}
\end{center}
\caption{Transmission coefficient as a function of the incident angle $\theta$, for different sizes of transition zones. Left: Computed for $d = -2-\imath$. Right: Computed for $d = -\tan(3\pi/8)-\imath$. }
\label{fig:Tthetas}
\end{figure}  
\begin{table}[h!]
  \begin{center}
    \caption{Geometry and mesh parameters for different sizes of the transition zone.}
    \label{tab:meshparam}
    \begin{tabular}{c|cccccc}
     $y_K$ & 6 & 7 & 8 & 9 & 10 & 11\\
      \hline
      $l_0$ & 2.13 & 1.82 & 1.59 & 1.42 & 1.28 & 1.16\\
      mesh size  & 0.73 &0.64 & 0.57 & 0.46 & 0.43 & 0.39 \\
      number of triangles & 25157 & 36548 & 42412 & 60944 & 66511 &  78470 \\
    \end{tabular}
  \end{center}
\end{table}
\begin{figure}
\begin{center}
\includegraphics[clip,trim=1cm .5cm 1cm .5cm,height=4cm]{./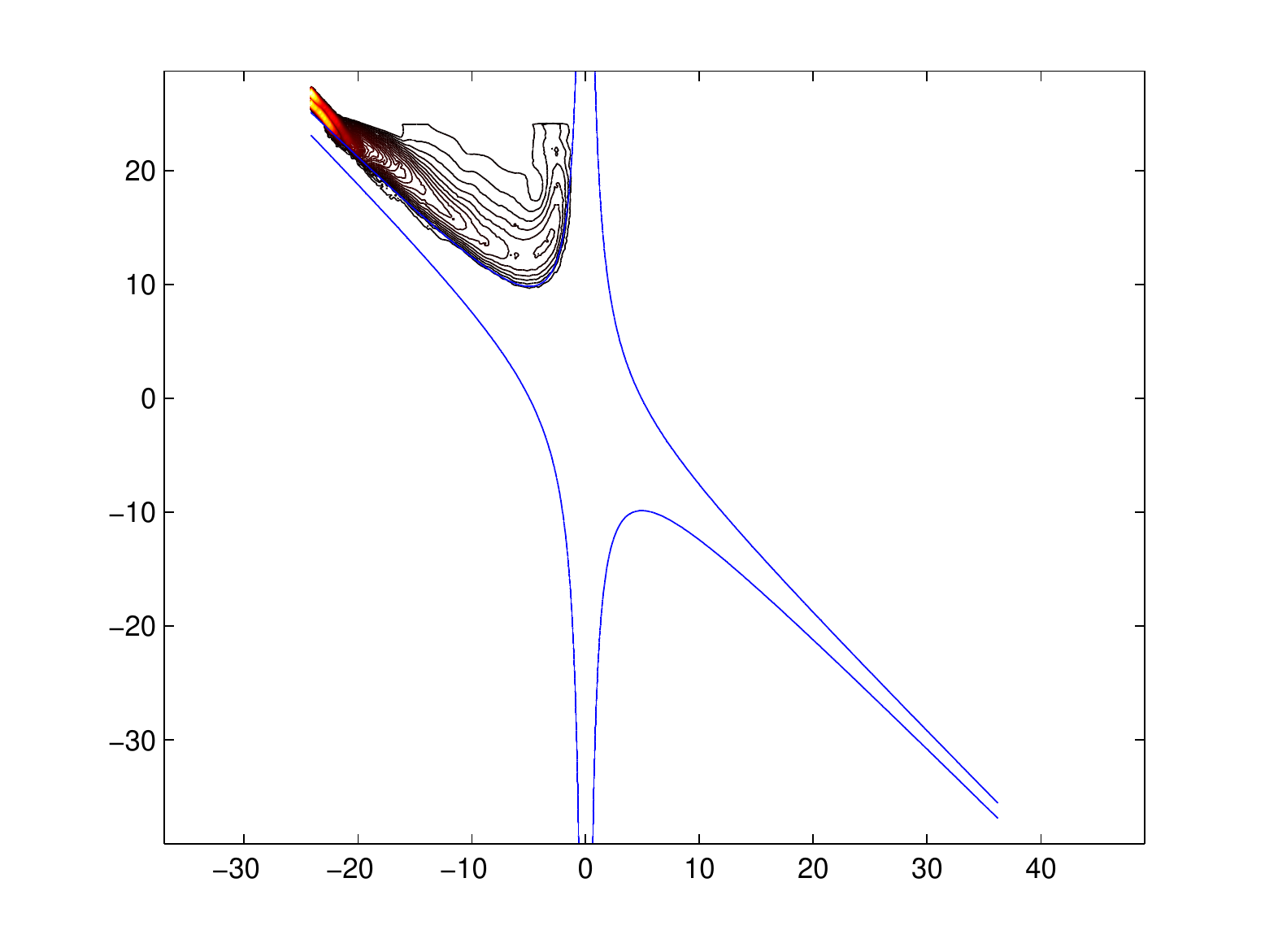}
\includegraphics[clip,trim=1cm .5cm 1cm .5cm,height=4cm]{./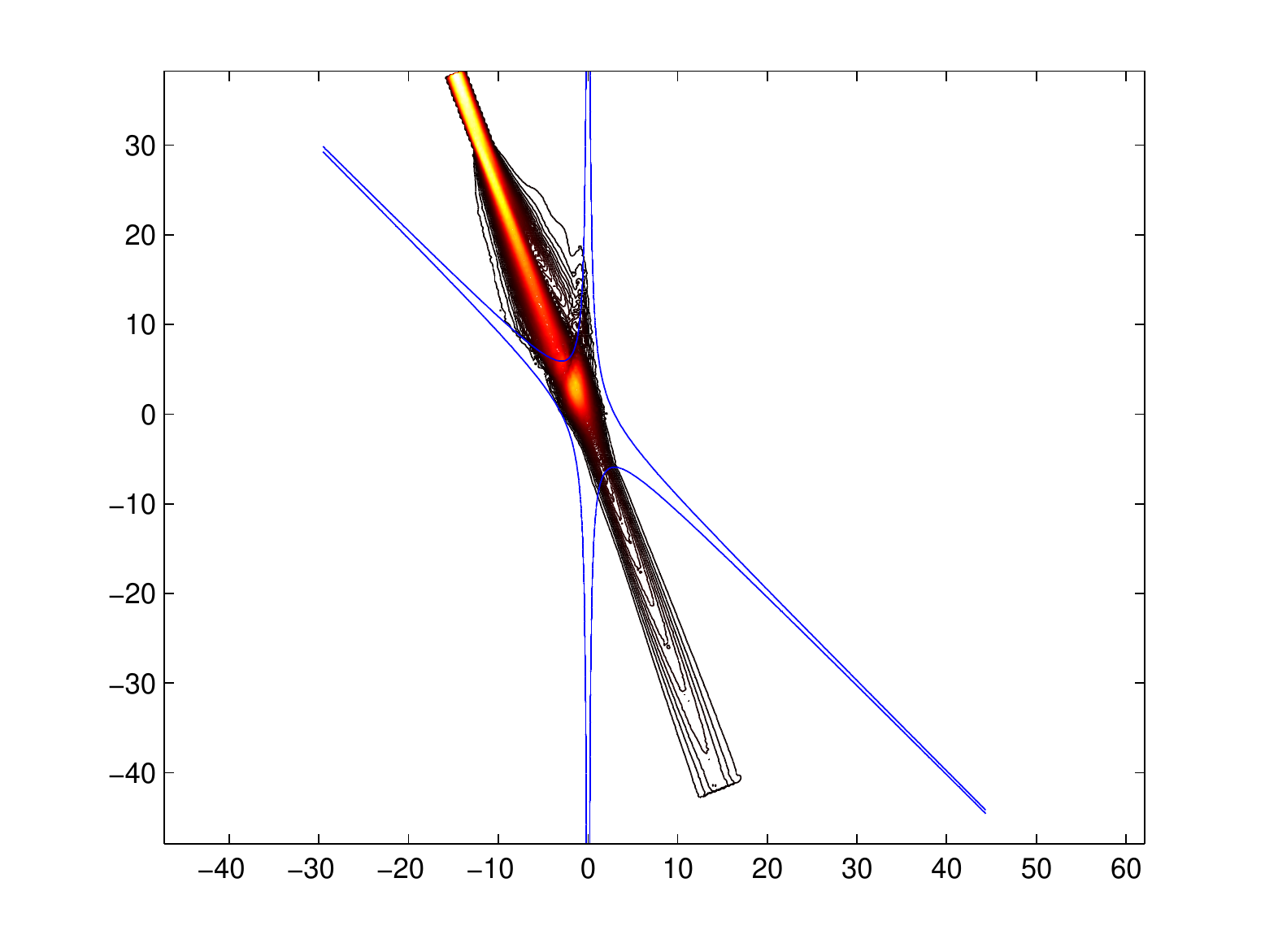}
\end{center}
\caption{Comparison of two solutions, highlighting the influence of the incident angle on the transmission coefficient. %Both solutions are computed with $p=7$ basis functions per element. 
Left: Computed for $y_K = 10$, $d = -2-\imath$ and $\theta = 0.74$, corresponding to a transmission coefficient $T\approx  10^{-5}$. %The computational domain is meshed with 66317 triangles. 
Right: Computed for $y_K = 6$, $d = -\tan(3\pi/8)-\imath$ and $\theta = 0.371$, corresponding to the transmission coefficient $T=0.065$.% from Figure \ref{fig:Tthetas}.  %The computational domain is meshed with 34197 triangles.
}
\label{fig:maxmin}
\end{figure}  

Figure \ref{fig:thetamain} evidences the importance of the size of the transition zone on the mode conversion efficiency. Indeed the maximum transmission coefficient is a decreasing function of $y_K$, which is a measure of the size of the transition zone. Moreover the size of the transition zone also influences the angle at which this maximum coefficient occurs. See Tables \ref{tab:2} and \ref{tab:3}.

\begin{table}[ht]
\caption{%\tbl{
Comparison of maximum transmission coefficient and corresponding incident angle, for different sizes of variable coefficient zone. Computed for $d = -2-\imath$.}
{\begin{tabular}{lcccccc}\hline%\toprule
$y_K$ & 6 & 7 & 8 & 9 & 10 & 11  \\ \hline\hline% \colrule
 $T_{\max}(y_K)$ & 3.49e-02 & 3.44e-02 & 3.18e-02 & 2.95e-02 & 2.48e-02 & 3.33e-02  \\ 
$\overline \theta(y_K)$ & 0.434 & 0.444 & 0.451 & 0.460 & 0.465 & 0.458 \\ \hline% \botrule
\end{tabular}}
\label{tab:2}
\end{table}

\begin{table}[ht]
\caption{%\tbl{
Comparison of maximum transmission coefficient and corresponding incident angle, for different sizes of variable coefficient zone. Computed for $d = -\tan(3\pi/8)-\imath$.}
{\begin{tabular}{lcccccc} \hline%\toprule
$y_K$ & 6 & 7 & 8 & 9 & 10 & 11  \\   \hline\hline%\colrule
$T_{\max}(y_K)$ & 6.50e-02 & 6.14e-02 & 6.09e-02 & 5.33e-02 & 4.57e-02 & 4.76e-02  \\ 
$\overline \theta(y_K)$ & 0.370 & 0.375 & 0.380 & 0.384 & 0.390 & 0.391 
\end{tabular}}
\label{tab:3}
\end{table}

\section*{Acknowledgment}
Many thanks to Harold Weitzner for introducing me to mode conversion and guiding me along the way, 
and to Jonathan Goodman for our many discussions. I also thank Teemu Luostari for providing his 2D PW-UWVF code for elasticity equations.

This work was supported in part by the U.S. Department of Energy, Office of Sci-
ence, under Grant No. DE-FG02-86ER53223.

%%%%%%%%%%%%%%%%%%%%%%%%%%%%%%%%%%%%%%%%%%%%%%%%%%%%%%%%%%%%%%%%%%%%%%%%%%%%
%%%%%%%%%%%%%%%%%%%%%%%%%%%%%%%%%%%%%%%%%%%%%%%%%%%%%%%%%%%%%%%%%%%%%%%%%%%%
%%%%%%%%%%%%%%%%%%%%%%%%%%%%%%%%%%%%%%%%%%%%%%%%%%%%%%%%%%%%%%%%%%%%%%%%%%%%
%\bibliographystyle{elsarticle-num}
%\bibliography{biblioMC}

\end{document}